\newcommand{\mI}{\ensuremath{\mathbb I}}
\newcommand{\mE}{\ensuremath{\mathbb E}}
\newcommand{\mP}{\ensuremath{\mathbb P}}
\newcommand{\mR}{\ensuremath{\mathbb R}}
\newtheorem{remark}{Remark}[section]
\numberwithin{equation}{section}
\newtheorem{prop}{Proposition}[section]
\newtheorem{theorem}{Theorem}[section]
\newtheorem{lemma}{Lemma}[section]
\newcommand{\bn}{{\bf n}}
\newcommand{\bq}{{\bf q}}
\newcommand{\bp}{{\bf p}}
\newcommand{\bone}{{\bf 1}}
\newcommand{\bD}{{\bf D}}
\newcommand{\bh}{{\bf h}}
\newcommand{\bX}{{\bf X}}
\newcommand{\btheta}{{\bm \theta}}
\begin{document}

\noindent {\sffamily\bfseries\Large  Optimal multiple testing and design in clinical trials}

\vspace{3mm}

\noindent%
\textsf{Ruth Heller}, Department of
Statistics and Operations Research, Tel-Aviv university, Tel-Aviv  6997801, 
Israel,  \textsf{E-mail:} ruheller@gmail.com\\
\textsf{Abba Krieger},
Department of statistics, University of Pennsylvania, Philadelphia, USA, \textsf{E-mail:}
krieger@wharton.upenn.edu.\\
\textsf{Saharon Rosset}, Department of
Statistics and Operations Research, Tel-Aviv university, Tel-Aviv  6997801, 
Israel,  \textsf{E-mail:} saharon@tauex.tau.ac.il

\begin{abstract}
A central goal in designing clinical trials is to find the test that maximizes power (or equivalently minimizes required sample size) for finding a false null hypothesis subject to the constraint of type I error. When there is more than one test, such as in clinical trials with multiple endpoints, the issues of optimal design and optimal procedures become more complex. In this paper we address the question of how such optimal tests should be defined and how they can be found. We review different notions of power and how they relate to study goals, and also consider the requirements of type I error control and the nature of the procedures. This leads us to %formulate the optimal policy problem as 
an explicit optimization problem with objective and constraints which describe its specific desiderata.  We present a complete solution for deriving optimal procedures for two hypotheses, which have desired monotonicity properties, and are computationally simple. For some of the optimization formulations this yields optimal procedures that are identical to existing procedures, such as Hommel's procedure or the procedure of \cite{Bittman09}, while for other cases it yields completely novel and more powerful procedures than existing ones. We demonstrate the nature of our novel procedures and their improved power extensively in a simulation and on the APEX study \citep{Cohen16-short}. 
\end{abstract}

\section{Introduction}

In a typical clinical trial setting the researcher is first required to determine the sample size. This calculation balances the desired power if the null hypothesis is false, with a prespecified requirement for Type I error control when the null hypothesis holds. 
The analysis is regulated: the hypotheses and sample size choices are made prior to data collection.

The three elements of the problem are 1) the objective: maximize power (at a specified material effect as postulated by the alternative hypothesis); 2) the condition: subject to  a prespecified Type I error under the null hypothesis; 3) the decision rule: values of the test statistic that lead to retaining or rejecting the null hypothesis. 

The above discussion is appropriate in the setting where there is a single hypothesis test under consideration. Often however, that is not the case. 
Multiple confirmatory endpoints are increasingly common in Phase III clinical trials  \citep{Dmitrienko18}, in addition to being  (almost) always considered in Phase I and II clinical trials.

The conceptual issues, as motivated by the three elements, are more complex when more than one null hypothesis  is tested. 
% When there are $K$ hypotheses,  there are $2^K$ states depending on whether each null hypothesis is true. Of course, one does not know the true state from among the set of $2^K$ possible states. 
One common solution for Type I error control is through {\em strong Family-Wise Error Rate (FWER) control}  by insisting that %no matter which of the %$2^K-1$  states holds, 
 the probability of rejecting any of the true null hypotheses does not exceed $\alpha$ for all possible parameter values.
No standard formulation exists for  the power objective: % is even more problematic than the constraint, for which standard formulations exist, as 
its  nature  depends on the criterion that the researcher wants to optimize, and it typically requires making assumptions about the actual %state vector of the $K$ hypotheses and the actual 
parameter values.

The most common approach in clinical trials with multiple endpoints is to choose an off-the-shelf multiple testing procedure (MTP) for the analysis. \cite{Ristl19} provide a comprehensive review of approaches  for  analyzing multiple endpoints in various clinical trial settings. For example,  when testing both primary and secondary endpoints, fixed sequence and hierarchical procedures are considered in order to test secondary endpoints only if the primary endpoints were rejected \citep{Dmitrienko18}. As another example, when considering contrasts of means, or the difference of means of various treatments from a common control, 
the test statistics have a specific dependence structure and the MTP takes 
the joint distribution of the test statistics into account \citep{Bretz10}.  

For the chosen off-the-shelf procedure, in the design of a clinical trial, the only remaining challenge is to determine the necessary sample size. 
In the  setting of comparing multiple treatments with a control, optimal sample size allocation has been addressed for single step MTPs  in \cite{Horn98}, for step-down MTPs in \cite{Hayter91}, and for  step-up MTPs in \cite{Dunnett01,Wang16}.

A second approach is to find the procedure that maximizes a desired aspect of power  within a selected set of allowed procedures, rather than start from a selected MTP. For example,  within the single step weighted Bonferroni procedures, the problem of optimizing the weights was considered by \cite{spjotvoll72,Westfall98,  Dobriban15}. 
As another example, \cite{Lehmann05} derive optimal policies under the severe restriction that the procedure has to be monotone in the following strong sense: if the value of the rejected $p$-value is decreased, and the value of the non-rejected $p$-value is increased, the set of rejections remains unchanged. This restriction is much stronger than our definition of weak monotonicity presented below, which allows rejection of  a $p$-value close to $\alpha$ if the other $p$-values are fairly small, but not reject it otherwise. 
%Such restrictions were typically imposed in order to be able to attain a solution that has optimal properties in the restricted class (e.g., the optimal weights). 

A third approach starts from a sensible local test for intersection hypotheses and uses the closed testing procedure of \cite{Marcus76}. Closed testing procedures necessarily provide strong FWER control, and optimal multiple testing procedures are necessarily closed testing procedures, see \cite{Goeman21} and the references within.  Since closed testing procedures do not necessarily result in rejections of any single hypothesis, \cite{Bittman09} provided a modification that removes from the rejection region all realizations that lead to non-consonant decisions (where a consonant decision is one that rejects at least one  individual null hypothesis, \citealt{Romano11}) and adds to the rejection region realizations that lead to consonant decisions. 
They used Stouffer's local test \citep{stouffer1949american}.
 %which is expected to result in a procedure that is more  more powerful than weighted Bonferroni with optimal weights, or than the procedure with strong monotonicity imposed by \cite{Lehmann05}. The added power is due to the fact that testing  intersection hypotheses with Stouffer is more powerful than with the minimum $p$-value statistic when all null hypotheses are false.  
 For the case of two endpoints, a
comprehensive comparison  of the suggestion in  \cite{Bittman09} with the Bonferroni and Simes local tests is given by \cite{Su12}.

\cite{Rosenblum14, Rosset18} considered  finding the optimal procedure for a power objective of interest while controlling in the strong sense a desired error rate, without restrictions to a selected set of allowed procedures. Determining the optimal multiple testing (OMT) procedure is computationally very challenging, so solutions were provided  only for $K=2$ in \cite{Rosenblum14} and for $K\leq 3$ in \cite{Rosset18}. We find in this work that imposing restrictions of interest may not only result in a more attractive procedure, but also one that is computationally far simpler.

 In \S~\ref{sec-elements} we discuss  the  elements of the multiple testing problem, and
we formulate our goal as an optimization problem.   
In \S~\ref{sec-K2} we provide our main methodological contribution for $K=2$ hypotheses. Some of the optimization problem formulations we tackle lead to OMT solutions that correspond to well known MTPs. Other formulations lead to new MTPs. In \S~\ref{sec-numerical} we compare numerically the MTPs. An interesting perspective we explore is determining optimal sample sizes and sample splitting strategies for such multiple testing scenarios. 
For example, given a predetermined total sample size, when is it optimal to split the sample size equally across the two null hypotheses?  Is it best to test both hypotheses with limited power, or to test only one with greater power? The answers turn out to be non-trivial.  Next, we provide a motivating example following \cite{Dmitrienko18}, and we revisit it in more detail in 
\S~\ref{sec-APEX}. 
In \S~\ref{sec-discuss} we conclude.

\subsection{Motivating example: a clinical trial with two populations}\label{subsec-APEX}

Following \cite{Dmitrienko18}, we use the APEX (Acute medically ill venous thromboembolism prevention with extended duration Betrixaban) trial to discuss analysis approaches in multiple population trials. The trial's goal was to examine the advantage of Betrixaban over Enoxaparin in patients at risk of venous thrombosis. For this purpose, patients who were hospitalized for acute medical conditions were randomly assigned to the two treatment arms. Patients with an elevated D-dimer level were expected to be at greater risk of venous thromboembolism and have greater benefit from the new therapy. 

The original trial design \citep{Cohen16-short} tested hypotheses in a prespecified order, so that testing stops at the first hypothesis with $p$-value greater than $\alpha$, see details in  Appendix A. 
\cite{Dmitrienko18} showed that with the Hochberg multiple testing procedure \citep{Hochberg88}, the superiority of Betrixaban would have been declared in the all-comers population. They argue that procedures that allow an efficacy finding in one cohort even though efficacy is not established in another cohort may be preferred, thus arguing in favor of a design using Hochberg's procedure over testing in order.

We argue that the choice of procedure need not be based on one off-the-shelf procedure or another, but rather on the  power objective that the researchers have in mind. To demonstrate our suggested approach, we shall use the APEX trial data as well. 
We  consider %one-sided hypotheses and 
the following two cohorts: the subgroup with elevated D-dimer level (3870 patients, 1914 received Betrixapan); and all-comers without elevated D-dimer level (2416 patients, 1198 received Betrixapan).  Since these are two disjoint populations, one could argue in favour of testing each  population at the nominal (two sided 5\%) level. But we analyze the two cohorts together (as in \citealt{Dmitrienko18}), with the aim of providing strong FWER control at the nominal (two sided 5\%) level.  

Table \ref{tab-data}
 shows the results in the two subgroups we are examining: 
\begin{table}
\caption{\label{tab-data} no.events/no.arm for the control arm (Enoxaparin) and for the treated arm (Betrixapan) for two subgroups in the APEX trial. The $p$-value is the probability of observing at most the difference observed in the sample proportions between Enoxaparin and Betrixaban (using the normal approximation to the two independent binomial proportions) assuming the null hypothesis that the risk of venous thrombosis with Betrixaban is the same within each arm.} 
\begin{center}
\begin{tabular}{lccc}
  \hline
 %& \multicolumn{2}{|c|}{Treatment arm} \\ 
  
%D-dimer status 
& Enoxaparin & Betrixaban & $p$-value \\ 
\hline
  Group 1: with elevated D-dimer level  & 166/1956 & 132/1914 & 0.032\\ 
  Group 2: All-comers excluding group 1  & 57/1218 & 33/1198 & 0.006  \\ 
   \hline
\end{tabular}
\end{center}
\end{table}

In order to find the optimal multiple testing (OMT) procedure for a particular power objective, we need to have a specific alternative in mind. Our power objective for each hypothesis is computed assuming that the event rate is 7.5\% in the control group, and there is an expected relative reduction of 35\% in the treatment group (so the rate is $0.65\times 7.5\% = 4.875\%$ in the treatment group). These were the parameters originally used in the APEX trial for determining the necessary sample sizes \citep{Cohen16-short}. 

We concentrate on procedures that offer strong FWER control at level $\alpha=0.025$ (to account for testing one-sided hypotheses). 
Figure \ref{fig-rejectionregion} shows %the OMT procedure for one of the objectives defined later, in addition to  
two popular off-the-shelf  procedures: the closed-Stouffer \citep{Henning15}, which  applies the closure principle to Stouffer's test for the intersection hypothesis;  and the Hommel procedure  \citep{Hommel88},  which applies the closure principle \citep{Marcus76} to the Simes test \citep{Simes86}  for the intersection hypothesis. We chose these two procedures since they use two different types of intersection tests: Stouffer is an additive combination type intersection test (which has good power when all null hypotheses are false); Simes relies on one of the ordered $p$-values that is at most $\alpha$ (thus it has better power when only few of the null hypotheses are false). 
%they represent two types of procedures \citep{Henning15}: Hommel's is based  on a minimum $p$-value type intersection test (which has  good power when a single null hypothesis is false), and it rejects at least as many hypotheses as the other well-known minimum $p$-value based procedures, the Bonferroni-Holm procedure and Hochberg's procedure (it coincides with Hochberg's procedure for $K=2$). The closed-Stouffer is based on an additive combination type intersection test (which has good power when all null hypotheses are false). 

\begin{figure}
 \begin{tabular}{cc}
\includegraphics[width=5.5cm,height=5.5cm]{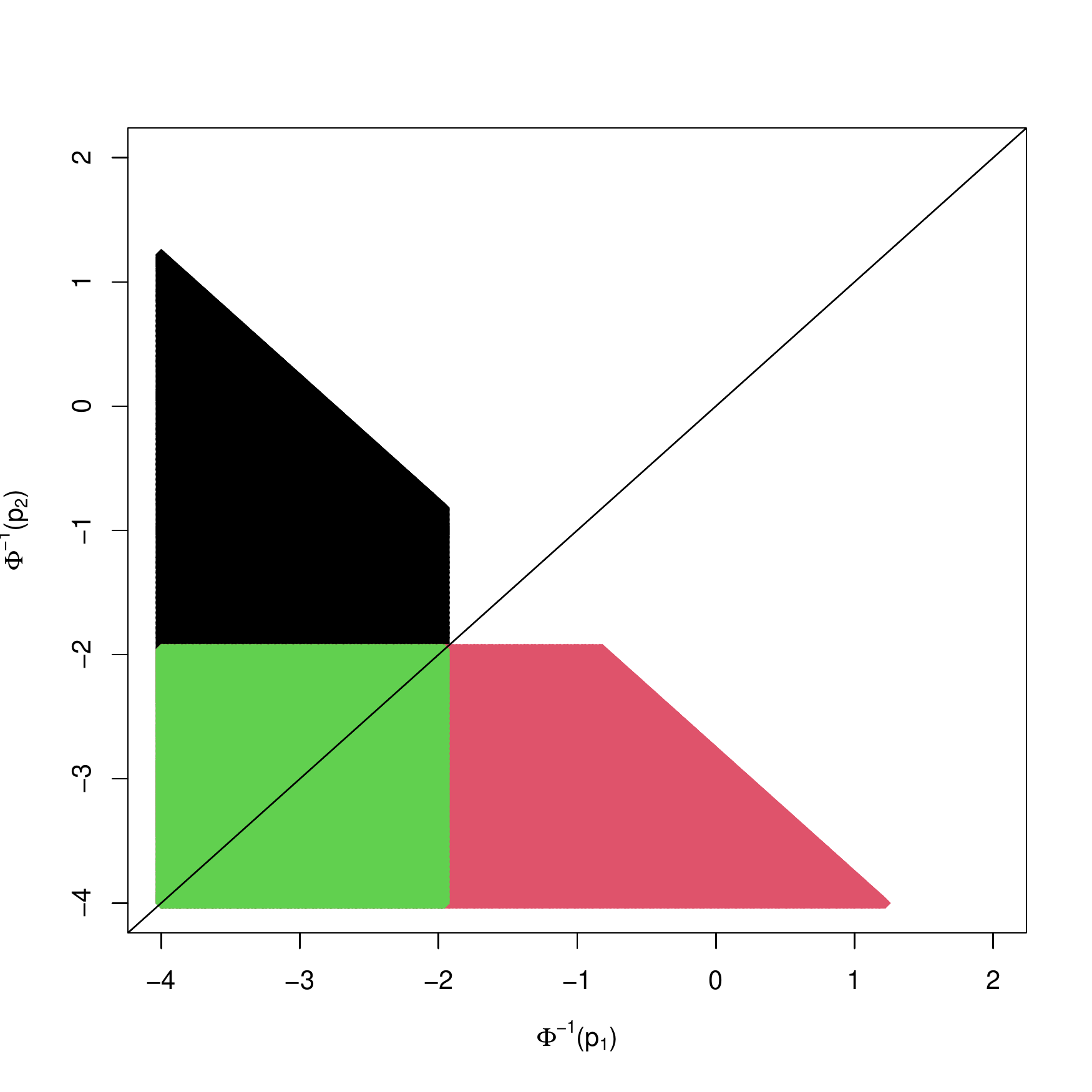}&
  \includegraphics[width=5.5cm,height=5.5cm]{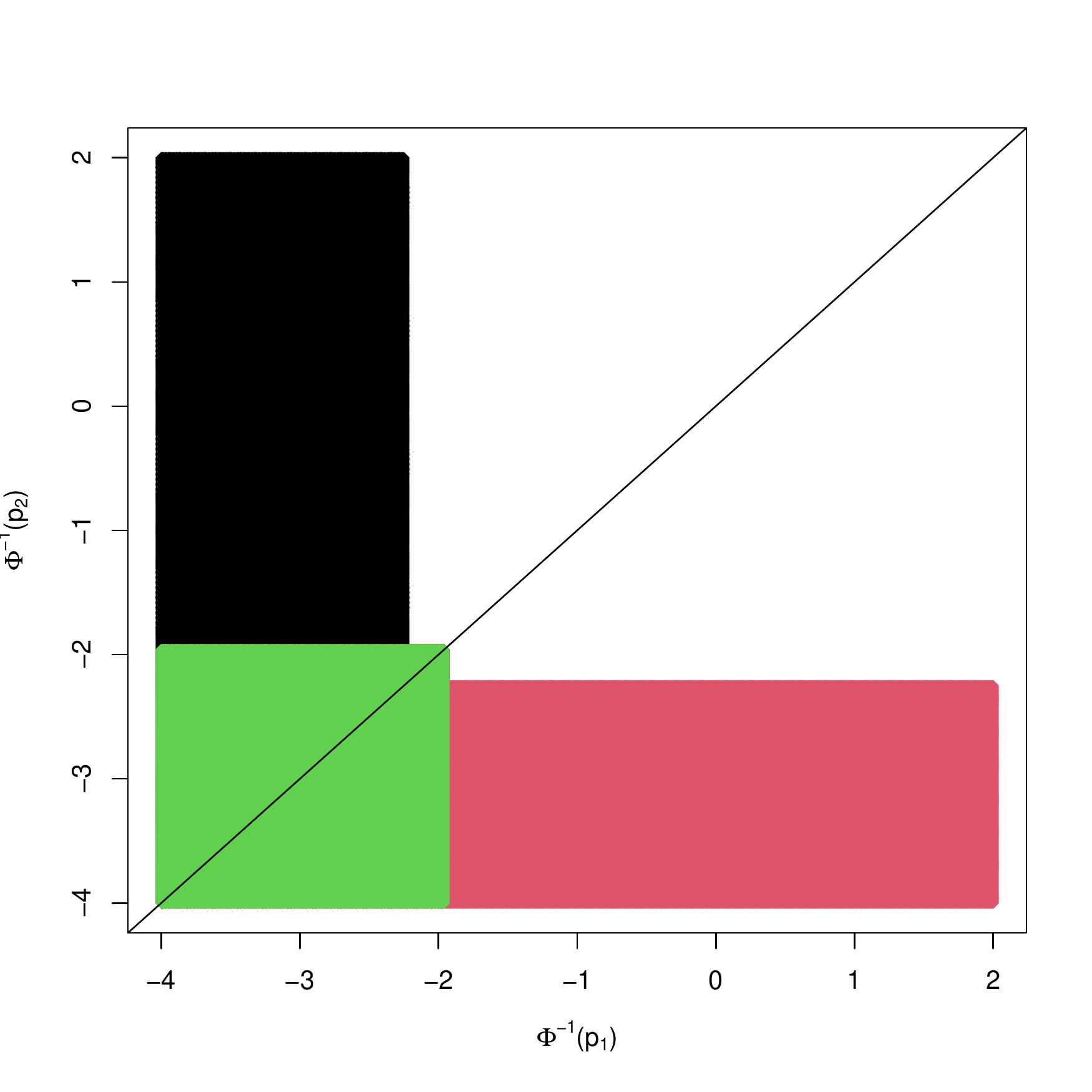}       
\end{tabular}
 \caption{\label{fig-rejectionregion}    T%hree procedures  with strong FWER control guarantee at level $\alpha = 0.025$: in the left panel, the novel procedure which  maximizes the probability of at least one rejection, assuming an equal prior probability for each possible configuration of true and false nulls and some additional constraints (see \S~\ref{sec-APEX} for details); and t
 he off the shelf procedures closed-Stouffer (left) and  Hommel (right).    
   In green: reject both hypotheses; in red: reject only the second hypothesis; in black: reject only the first hypothesis. The regions are depicted for the $z$-scores, i.e., for $(\Phi^{-1}(p_1), \Phi^{-1}(p_2))$.
% The $p$-value {\em ordering property} is clearly satisfied since  the 45 degree line divides the decisions so that above the line either the first null hypothesis or both hypotheses are rejected, and below it either the second null hypothesis or both hypotheses are rejected; {\em weak monotonicity} is satisfied  since the slopes are negative and there are no gaps, so that if $p_1<q_1$, then $\{p: D_2(q_1,p ) =1\} \subseteq \{p: D_2(p_1,p ) =1\}$.
  }
\end{figure}

In \S~\ref{sec-APEX} we compare and contrast these procedures with novel ones, and we show  that the obtained power (objective) of novel procedures guided by optimality considerations can outperform the off-the-shelf procedures. 

\section{The elements of the problem for multiple hypotheses}\label{sec-elements}
%We first frame  the problem of finding the optimal testing procedure in the multiple testing framework mathematically. 

We define a statistical model that is very general and encompasses many designs that are  encountered in practice. Assume there is a parameter space $\Theta$. The data, $X$, are generated by $\mP_\vartheta$ for $\vartheta \in \Theta$. %There are $K$ hypotheses. Hypothesis $k$ divides the parameter space into $\Theta_{0k}$ and its complement, $\Theta_{1k} = \overline{\Theta_{0k}}$. So this induces $h_k(\theta)$, which takes on the values of 0 (null hyothesis) and 1 (alternative hypothesis) if $\theta \in \Theta_k$ or its complement respectively.  Let $(\Omega, \mathcal F, (\mP_{\vartheta})_{\vartheta\in \Theta})$ be a statistical model; $\bX$ the data generated with parameter ${ \vartheta} \in \Theta$.  
We consider testing $K$ null hypotheses, and denote by $\theta_k(\vartheta)$ the parameter of interest in the $k$th hypothesis testing problem, $\theta_k: \Theta\longrightarrow \Theta_k\subseteq \mR, k=1,\ldots,K$. The binary hypothesis state vector  is denoted by $\bh(\vartheta) = (h_1(\vartheta),\ldots, h_K(\vartheta))$, so $h_k(\vartheta)=0$ if $\theta_k(\vartheta)$ is in the null subset of $\Theta_k$.  For example, we may consider $h_k(\vartheta)=0$ if $\theta_k(\vartheta)\geq 0$ and $h_k(\vartheta)=1$ if $\theta_k(\vartheta)<0$, $k=1,\ldots, K$.  
%The statistical model is very general and encompasses many designs encountered in practice. 
As another example, in the normal means problem,  $\vartheta = ({\bm \mu}, {\bm \Sigma}), $ and  designs of interest are as follows:  multiple treatments with a control, where $\theta_k(\vartheta) = \mu_k-\mu_{K+1}, \ k=1,\ldots,K$; pairwise comparisons, where each $\theta_{k}(\vartheta)$ corresponds to a $\mu_i-\mu_j$. 

The problem the researcher faces is to find the mapping from the sample space to the decision space, $\bD(\bX)= (D_1(\bX), \ldots, D_K(\bX))$, where $D_k(\bX) =1$ if the decision is to reject the $k$th null hypothesis, and $D_k(\bX) =0$ otherwise. 

Henceforth we suppress the dependence of $\bh$ and ${\btheta} = (\theta_1,\ldots, \theta_K)$ on $\vartheta$, 
and of $\bD$ on $\bX$. 

We want to make as many correct rejections of null hypotheses, henceforth also referred to as true discoveries. An incorrect rejection, i.e., a rejection of a true null hypothesis, is referred to as a false discovery.

\subsection{The power objective }\label{subsec-objective}  
For measures of power, we consider the probability of at least one true discovery  as well as the total number of true discoveries: 
$$
\Pi_{any}(\bD)= \mP \left(\sum_{k=1}^Kh_kD_k > 0\right); \quad \Pi_{total}(\bD) =\mE \left(\sum_{k=1}^Kh_kD_k\right).
$$
These objectives can take on different meanings and interpretations, depending on what we assume about the parameters $\vartheta,$ which affect the distribution and the value of the indicators $h_k.$

For a fixed set of alternatives, i.e., $\vartheta \in \Theta$  such that $\bh(\vartheta) = \bone$,
$\Pi_{any}(\bD)=\mP_\vartheta \left(\sum_{k=1}^K D_k > 0\right)$ and $\Pi_{total}(\bD)= \mE_\vartheta \left(\sum_{k=1}^K D_k\right)$ are the probability of making any discoveries (considered in \citealt{Bittman09, Romano11}) and the expected number of discoveries, respectively, if all null hypotheses  are false, and have the specific parameters implied by $\vartheta$.

More generally, if $\vartheta$ has prior distribution $\Lambda$, then $\Pi_{total}(\bD) = \int(\mE(\sum_{k=1}^Kh_kD_k\mid \vartheta)d\Lambda(\vartheta)$ and   $\Pi_{any}(\bD) = \int\mP\left(\sum_{k=1}^Kh_kD_k > 0\mid \vartheta\right)d\Lambda(\vartheta)$ (considered in \citealt{Rosenblum14} for $K=2$).

For practical problems it may be  useful to maximize the minimum power among all alternatives of interest, so  the minimax objectives for a range of possible parameter settings  $\vartheta \in \Theta_{\Pi}\subset \Theta$ corresponding to $\Pi_{any},\; \Pi_{total}$ are, respectively \citep{Romano11, Rosenblum14, Rosset18}: 
$$
\min_{\vartheta \in \Theta_{\Pi}} \mP_{\vartheta} \left(\sum_{k=1}^Kh_kD_k > 0\right); \quad \min_{\vartheta \in \Theta_{\Pi}} \mE_{\vartheta} \left(\sum_{k=1}^Kh_kD_k \right).
$$
For example, if $h_k=0$ for $\theta_k\geq 0$ and $h_k=1$ for $\theta_k<0$, $k=1,\ldots, K$, then it may be desirable to obtain the maximin solution if the alternative parameter value is at most $-2$ for every coordinate, so
$$
\min_{\vartheta \in \Theta_A} \mP_{\vartheta} \left(\sum_{k=1}^KD_k > 0\right); \quad \min_{\vartheta \in \Theta_A} \mE_{\vartheta} \left(\sum_{k=1}^KD_k \right)
$$
for $\Theta_A = \{\vartheta: \theta_k(\vartheta)\leq -2, k=1,\ldots,K \}$.

Another important model is the two group model, pioneered by Efron  \citep{Efron01}.
In our framework this model can be formulated as an implicit prior distribution over $\vartheta,$ by assuming that each $h_k\sim Ber(\pi)$ i.i.d, and $\theta_k | h_k=0$ is fixed at a certain value (say $ \theta_{k0}$), and similarly   $\theta_k | h_k=1$ is fixed at another value (say  $\theta_{k1})$%, then the setup is known as the two group model pioneered by Efron  \citep{Efron01}
. 
In this setting the common power function considered is: 
 $$\Pi_{total}(\bD) = \pi\sum_{k=1}^K\mP(D_k=1\mid h_k=1).$$ 
An extension in the spirit of maximin may be to assume that the parameters $\theta_{k0},\theta_{k1}$ are unknown, but our interest is in a predefined range of parameter values, and then the maximin objective is to maximize the following: 

%As another example, if we assume the two-group model described above, with fixed yet unknown values for $\theta_k\mid h_k$,  then 
%$\mE_{\vartheta} \left(\sum_{k=1}^Kh_kD_k\right) = \pi\sum_{k=1}^K\mP_{\vartheta}(D_k=1\mid h_k=1)$,and 
%an objective of interest may be
$$
\min_{\{(\theta_{k1}, \theta_{k0}) \in  (-\infty, -2]\times [0,\infty), \ k=1,\ldots,K\}}  \sum_{k=1}^K\mP(D_k=1\mid h_k=1).
$$

\subsection{The Constraints}\label{subsec-constraints}

The most common error measure for control over false discoveries in clinical trials is the FWER, which is the probability of falsely rejecting at least one true null hypothesis: 
$
\textrm{FWER}_{\vartheta} = \mP_{\vartheta}\left(\sum_{k=1}^K(1-h_k)D_k>0 \right).
$
This is the measure of error recommended  by the FDA \citep{FDA17} in their  comprehensive guidance on handling multiple endpoints in clinical trials. The recommendation is to control the FWER in the strong sense, i.e., 
for every possible parameter vector (including the elements that are not null), at a pre-specified level $\alpha$: $\textrm{FWER}_{\vartheta}\leq \alpha \ \forall \ \vartheta \in \Theta. $

%Additionally, the \cite{FDA17} discusses global control (also known as FWER control in the weak sense), i.e., for the setting that all null hypotheses are true. With global control it is not possible to identify the non-null hypotheses, since the conclusion from rejection of the global null is only that at least one null hypothesis is false. Therefore, the FDA does not encourage designing a clinical trial with global control  when discovery of individual non-null hypotheses is feasible. 
%

Another  error measure is the false discovery rate (FDR), which is very popular when many hypotheses are simultaneously examined. The  FDR control in the strong sense is satisfied at level $\alpha$ if: 
$
\textrm{FDR}_{\vartheta} = \mE_{\vartheta}\left(\frac{\sum_{k=1}^K(1-h_k)D_k}{\max(\sum_{k=1}^K D_k,1)} \right)\leq \alpha \ \forall \  \vartheta \in \Theta.
$
A weighted version of this error rate was recommended in \cite{Benjamini17} for studies with multiple primary and secondary endpoints in clinical trials. 
The FDR can be controlled in the strong sense \citep{Benjamini95} or under a pre-specified data generation process such as Efron's ``two-group model" \citep{Efron01}. %Typically, the two-group model  needs to be estimated from the data.  This may be possible only for a large enough $K$.% but not when $K$ is small.

We focus on  strong FWER control in this work, since our interest is in moving from $K=1$ to $K>1$ in clinical trials, where $K$ is usually still very small.%, often $K=2$ or $K=3$.

\subsection{Restrictions on the decision rule}\label{subsec-restrictions}

In this section we discuss ``common sense'' properties that we may wish our  procedures to satisfy. %Some of them will be naturally fulfilled by the solution to the optimization problems we define later, but some may need to be added as explicit constraints if we want the resulting procedures to comply with them. 
It will be convenient to describe these properties (or restrictions) in terms of $p$-values rather than the data. %The mapping to $p$-values is usually done through the  likelihood ratio statistic for each hypothesis. 
The problem at hand is that we observe $K$ $p$-values for the $K$ tests, denoted by $\bp=(p_1,\ldots,p_K)$, and based on these values the decision rule indicates which null hypotheses are rejected and which are not. %In this subsection we consider restricting the allowed procedure to have some desired properties.  

One property may be to reject only hypotheses with $p$-values at most $\alpha$, since intuitively when facing multiplicity the rejection threshold should be adjusted to be more severe than when only a single hypothesis is tested. This restriction appears when  enforcing consonance  in closed testing procedures \citep{Romano11}.
We say a procedure is 
 {\em marginally nominal} $\alpha$  if it satisfied $D_i(\bp) = 0$ for $p_i>\alpha$, $i=1,\ldots,K$. 

%We note that for global control a marginally nominal $\alpha$ procedure may not be of interest, since the global test $p$-value may be small even if all individual $p$-values are above $\alpha$. For example, if in $K=10$ tests  all observed $p$-values are between $0.05$ and $0.50$ (taking the value of $\alpha$ to be the customary $0.05$), the combined evidence suggests that the global null is false but the sample sizes may be too small to discern which of the individual null hypotheses are false.  However, for clinical trials it is typically critical to identify the false null hypotheses, and the conclusion from rejecting the global null test, that at least one null hypothesis is false, is not sufficient \citep{Bittman09}. 

Another property is a restriction that is  logical to impose when considering pairs of vectors of $p$-values, $\bp$ and $\bq$. This restriction appears for the optimization of exact tests in \cite{Ristl18}. Let $\succeq, \preceq$ symbolize that  the partial order relations are satisfied if the inequality holds for every coordinate. We call a procedure 
{\em weakly monotone} if $\bD(\bp) \succeq \bD(\bq)$ whenever  $\bp\preceq \bq$. 
The procedures depicted in Figure \ref{fig-rejectionregion} and \ref{fig-rejectionregion2} are all weakly monotone, since the slopes are negative and there are no gaps, so that if $p_1<q_1$, then $\{p: D_2(q_1,p ) =1\} \subseteq \{p: D_2(p_1,p ) =1\}$. 
%Procedures that are not weakly monotone were addressed in \cite{Rosset18} and \cite{Rosenblum14b}, see details in \S~\ref{subsec-connections}. %e.g., for $K=2$ in their Figure 3 an optimal policy for objective $\Pi_{avg}$ with strong FWER control at level 0.05  rejects both hypotheses if $\vec p = (0.05, 0.05)$ but only the hypothesis with smallest $p$-value  if $\vec p = (0.01,0.03)$. 
% \cite{Rosset18}   pointed out that such procedures are (arguably) unattractive, and suggested enforcing weak monotonicity. 

%We note that weak monotonicity considers the tests in their original order. For example, this restriction would not be binding if the two sets of $p$-values are $(0.01,0.03)$ and $(0.04,0.02)$. 
%When tests are not exchangeable we believe this is the right approach: even if the parameters tested are the same, as long as the sample sizes are not the same then the $p$-values do not have the same distribution under the alternative, and therefore it makes sense not to impose restrictions when the inequality does not hold for every coordinate of $\bp$ and $\bq$. 
%A possible exception is the case where the $p$-values are ``exchangeable'' and have the same distribution under each alternative, where an additional requirement may be that the order statistics of the $p$-values should satisfy weak monotonicity. 

\subsection{Formulation as an optimization problem}
To formulate the  problem as an optimization problem, we need to select the objective, and define the constraints we wish to impose. 
 We denote any of the power functions discussed in \S~\ref{subsec-objective}  generically by  $\Pi(\bD)$. We write the optimization  problem  of finding the test with optimal power, subject to strong FWER  control, as an infinite dimensional binary program (i.e., decide which hypotheses to reject for every realized vector of $p$-values): 
 
 \begin{eqnarray}\label{opt}
\max_{\vec D:[0,1]^K\rightarrow \{0,1\}^K} && \Pi(\bD)\\
\mbox{s.t. } &&  \textrm{FWER}_{\vartheta}(\bD) \leq \alpha\;,\; \forall  \vartheta\in \Theta.
 \nonumber
\end{eqnarray}
% where $\Pi(\vec D)$ is any of the power functions discussed in \S~\ref{subsec-objective}.
 
 We can also include the problem of optimal sample size allocation for a given total sample size $N = \sum_{k=1}^K n_k$, where  $n_k$ observations are used to compute the $p$-value for the $k$th null hypothesis. This problem arises when the $K$ hypotheses refer to $K$ treatment groups, or to $K$ different populations (in other settings finding optimal sample sizes is more complicated).  The solution is found by solving \eqref{opt} for different values of  $\bn = (n_1,\ldots,n_k)\in \{ (n_1,\ldots,n_K): \sum_{k=1}^K n_k=N, n_k\geq 0, n_k\in \mathbb N,  k=1,\ldots,K \}$, and searching for  the value of $\bn$ that maximizes the objective.

We are interested in enforcing the {\it marginally nominal $\alpha$} property: 
 \begin{eqnarray}\label{opt4-nominal}
\max_{\bD:[0,1]^K  \rightarrow \{0,1\}^K} && \Pi
(\bD)\\
\mbox{s.t. } &&   \textrm{FWER}_{\vartheta}(\bD) \leq \alpha\;,\; \forall  \vartheta\in \Theta; \quad D_i(\vec p) = 0 \ \textrm{if } \ p_i>\alpha, \ i=1,\ldots,K. \nonumber 
\end{eqnarray}

Or in enforcing  {\it weak monotonicity}: 
 \begin{eqnarray}\label{opt4}
\max_{\bD:[0,1]^K  \rightarrow \{0,1\}^K} && \Pi
(\bD)\\
\mbox{s.t. } &&  \textrm{FWER}_{\vartheta}(\bD) \leq \alpha\;,\; \forall  \vartheta\in \Theta; \quad \bD(\bp)\succeq \bD(\bq), \ \forall \bp\preceq \bq. 
 \nonumber
\end{eqnarray}

\section{The case of $K=2$ hypotheses}\label{sec-K2}
%We assume that the marginal distribution of $p$-values under the null hypothesis (i.e., when $h_k=0$) is uniform over the unit interval%, and that $\theta_i$ only affects the marginal distribution of $p_i$
%. Furthermore, when the  alternative hypothesis is true (i.e., $h_i=1$),  the distribution of the $p$-values  is stochastically smaller than  uniform and the $p$-value density is non-increasing. 

In \S~\ref{subsec-K2-nominal} we solve the OMT problem 
under the marginally nominal $\alpha$ requirement, problem \eqref{opt4-nominal}. 
In \S~\ref{subsec-K2-weaklymonotone} we show that the same solution applies under the weakly monotone requirement, problem \eqref{opt4}, when some assumptions are added. 
In \S~\ref{subsec-connections} we discuss power objectives that result in Hommel's procedure and 
in the procedure suggested by \cite{Bittman09}.

For $K=2$ hypotheses, the power objectives in \S~\ref{subsec-objective} can all be expressed in the following simple form: 
\begin{equation}
\Pi(\bD) = \int (D_1(\bp)  a_1(\bp)+D_2(\bp)  a_2(\bp)+D_3(\bp)  a_3(\bp) )d \bp,\label{eq-unified-expression}
\end{equation}
 where $D_3(\bp) = \max(D_1(\bp), D_2(\bp ))$ and the functions $a_i(\bp), i=1,2,3$ are objective-specific functions of the $p$-values.

Next, we provide the $a_i(\bp)$'s for three specific objectives, that we shall denote by specific names, as they will be referred to henceforth in our numerical investigations. %I only see two $\Pi_1$ is not defined here and avg and total are ostenibly the same.
For simplicity, we assume that the bivariate density of $\bp$ is continuous, and  that  the marginal distribution of the $p$-values under the null hypothesis (i.e., when $h_k=0$) is uniform over the unit interval (to be a valid $p$-value, its null distribution has to be stochastically at least as large as the uniform, and for continuous test statistics it is often uniform at the parameter value that separates the null from the alternative, \citealt{Lehmann05b}).

We need the following additional notation: 
for $\vartheta\in \Theta$, let $g_{\vartheta}(\bp)$ denote the joint density of the $p$-values, and let $g_{1}(p_1) = \int_0^1g_{\vartheta}(\bp)dp_2$ and $g_{2}(p_2) = \int_0^1g_{\vartheta}(\bp)dp_1$ be the two marginal densities. 

Considering a  fixed $\vartheta\in \Theta$ such that $\bh(\vartheta) = \bone$, $\Pi_{any}(\bD)$ is expressed by plugging into \eqref{eq-unified-expression} $a_1(\bp)=a_2(\bp)=0$ and $a_3(\bp) = g_{\vartheta}(\bp)$. We denote  this objective henceforth as $\Pi_{any}$, in agreement with previous literature \citep{Bittman09, Romano11}. $\Pi_{total}(\bD)$ is expressed by plugging into \eqref{eq-unified-expression} $a_1(\bp)=a_2(\bp)=g_{\vartheta}(\bp)$ and $a_3(\bp) = 0$.  We denote the objective $\Pi_{total}(\bD)/2$, i.e., the expected average number of discoveries (which are necessarily true discoveries), as $\Pi_{avg}$.

If with probability 1/2, $h_i=1$ with $p$-value density $g_{\theta_i}(p_i)$, and otherwise $h_i=0$ with a uniform $p$-value density, then: $\Pi_{any}(\bD)$ is expressed by plugging into \eqref{eq-unified-expression} $a_i(\bp)=g_{i}(\bp)/4$ for $i=1,2$  and $a_3(\bp) = g_{\vartheta}(\bp)/4$;  $\Pi_{total}(\bD)$ is expressed by plugging into \eqref{eq-unified-expression} $a_i(\bp)=g_{i}(\bp)/2$ for $i=1,2$  and $a_3(\bp) = 0$. We denote by $\Pi_1$ the objective $\Pi_{total}(\bD)$ for this choice of prior probability on the hypotheses states (the subscript one in $\Pi_1$ is used since this objective also corresponds  to the average number of rejections if exactly one null hypothesis is false, and a-priori each has probability $1/2$ of being the false null hypothesis).

\subsection{The marginally nominal $\alpha$ OMT procedure for strong FWER control}\label{subsec-K2-nominal}

For maximizing power,  we need to consider the power objective's integrand in \eqref{eq-unified-expression}, $D_1(\bp)  a_1(\bp)+D_2(\bp)  a_2(\bp)+D_3(\bp)  a_3(\bp)$. Since $D_i(\bp)\leq \mI(p_i\leq \alpha)$, the integrand is considered with $\bD$ replaced with  $(\mI(p_1\leq \alpha), \mI(p_2 \leq \alpha))$, call this the {\em score} function
  $$s(\bp) = \mI(p_1\leq \alpha) \times a_1(\bp)+\mI(p_2\leq \alpha)\times  a_2(\bp)+\mI(\max(p_1,p_2)\leq \alpha)  a_3(\bp).$$

If  one of the null hypotheses is true and the other is false then the FWER constraint is satisfied, since 
$\{\bp: D_i(\bp)=1\}\subseteq \{\bp: p_i\leq \alpha \}$. Thus the only binding integral constraint is the global null constraint $\textrm{FWER}_{\vartheta_0}\leq \alpha$, where ${\vartheta_0} $ is the parameter vector for which both $p$-values have a marginal uniform distribution.  
 The area for rejection in the unit square has to be $\alpha$, but the decision which realizations $\bp$ to include in the rejection area are solely driven by $s(\bp)$. Hence the algorithm: 
 
\begin{enumerate}
\item Consider for rejection only vectors $\bp$ for which at least one $p$-value  does not exceed $\alpha$.
\item Further retain only the vectors $\bp$  which give the highest benefit in power. This is done by finding the threshold $t$
for which $\mP_{\vartheta_0}(s(\bp)>t) = \alpha$, that is $t$ is such that rejecting all points with scores above it, the FWER constraint (at the global null $\vartheta_0$) is exactly $\alpha$.  For each retained vector, the decision will be to reject at least one hypothesis. 
    \item Reject all hypotheses with a $p$-value that does not exceed $\alpha$ among the retained vectors $\bp$ in Step (2), i.e.,  if both $p$-values are at most $\alpha$ reject both hypotheses, otherwise reject only the hypothesis with the smaller $p$-value. 
\end{enumerate}

More concisely, the OMT procedure for strong FWER control at level $\alpha$ is: 
\begin{equation}\label{eq-novelpolicy-1}
    D^*_i(\vec p) = \mI(p_i\leq \alpha)\times \mI( s(\bp)>t(\alpha)), i=1,2,
\end{equation}
where $t(\alpha)$ is the threshold that satisfies  
\begin{eqnarray}\label{eq-novelpolicy-2}
\int_{0}^1\int_0^1\max(D^*_1(u,v), D^*_2(u,v))g_{{\vartheta_0}}(u,v)du dv= \alpha.
\end{eqnarray}  %For simplicity, let ${\bm \theta_0} = (0,0)$ be the null boundary parameter vector.
For independent $p$-values $g_{\vartheta_0}(u,v)=1$.

The marginally nominal $\alpha$ OMT procedure is formalized in the following theorem. 

 \begin{theorem}\label{prop3-K2}
%Under the conditions of Proposition \ref{prop2-K2}, 
%Among policies for which only a hypothesis with $p$-value at most $\alpha$ can be rejected,  
 Assuming the joint density of the $p$-values is continuous, then among marginally nominal $\alpha$ procedures, 
%if the joint density of the $p$-values  is non-increasing in each coordinate, and 
%the objective's  integrand is non-decreasing as $\vD$ increases,   then 
the objective is maximized with strong FWER control for  procedure
\eqref{eq-novelpolicy-1}-\eqref{eq-novelpolicy-2},
where $s(\bp)$ is the objective's integrand with $\bD$ replaced by $(\mI(p_1\leq \alpha), \mI(p_2 \leq \alpha))$.
\end{theorem}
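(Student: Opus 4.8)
The plan is to cast the problem as a constrained optimization over rejection regions and exploit the structure imposed by the marginally nominal $\alpha$ restriction to reduce the FWER constraint to a single binding inequality at the global null. First I would observe that, under the marginally nominal $\alpha$ restriction, we have $D_i(\bp)=0$ whenever $p_i>\alpha$, so the rejection set for hypothesis $i$ satisfies $\{\bp:D_i(\bp)=1\}\subseteq\{\bp:p_i\leq\alpha\}$. The key consequence is that whenever exactly one null hypothesis is true (say the $i$th), the event of falsely rejecting it is contained in $\{p_i\leq\alpha\}$, which under the marginal uniformity assumption has probability exactly $\alpha$; hence $\textrm{FWER}_\vartheta\leq\alpha$ automatically. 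The same reasoning handles the case where no null is true (FWER is vacuously zero). Thus the only configuration in which the constraint can bind is the global null $\vartheta_0$, where both marginals are uniform, and the FWER there equals $\mP_{\vartheta_0}(\max(D_1,D_2)>0)$, i.e.\ the probability of rejecting \emph{at least one} hypothesis. This reduces \eqref{opt4-nominal} to: maximize the objective subject to the \emph{single} integral constraint that the measure (under $g_{\vartheta_0}$) of the rejection region equals $\alpha$.

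Next I would linearize the objective. The integrand in \eqref{eq-unified-expression} is $D_1 a_1+D_2 a_2+D_3 a_3$ with $D_3=\max(D_1,D_2)$. Because each $D_i$ may only be turned on where $p_i\leq\alpha$, the \emph{maximum achievable} pointwise contribution from including a point $\bp$ in the rejection region is obtained by setting each $D_i$ to its largest admissible value, namely $\mI(p_i\leq\alpha)$; this yields precisely the score $s(\bp)$ defined in the statement. So on any point we decide to reject, the best we can do is contribute $s(\bp)$ to the objective while contributing $g_{\vartheta_0}(\bp)\,\mI(\max(p_1,p_2)\leq\alpha)$-type mass to the FWER. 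The crux is then a Neyman--Pearson / bathtub argument: to maximize $\int_R s(\bp)\,d\bp$ subject to the rejection region $R$ having $\vartheta_0$-FWER equal to $\alpha$, one should include points in decreasing order of the ratio of marginal objective benefit to marginal FWER cost. Under the continuity hypothesis the relevant distributions have no atoms, so a threshold $t(\alpha)$ exists that exhausts the budget exactly, and the optimal region is $\{\bp:s(\bp)>t(\alpha)\}$ intersected with the marginally-nominal constraint, giving \eqref{eq-novelpolicy-1}. I would then verify that on this region the within-point choice $D_i^*=\mI(p_i\leq\alpha)\mI(s>t)$ indeed realizes $s(\bp)$ and that \eqref{eq-novelpolicy-2} is the correct normalization of the binding constraint.

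The main obstacle I anticipate is making the Neyman--Pearson step fully rigorous when the score $s$ and the FWER density do not increase in a perfectly aligned way. The subtlety is that the ``benefit'' is measured by $s$ against Lebesgue measure in the objective, whereas the ``cost'' is measured by $g_{\vartheta_0}$ restricted to the consonance-inducing set, so one must confirm that ranking points purely by $s(\bp)$ (as the algorithm does) coincides with ranking by the benefit-to-cost ratio. This works cleanly when the cost density is constant on the admissible region (the independent case, $g_{\vartheta_0}\equiv 1$, where it reduces to an ordinary Neyman--Pearson thresholding), but in the dependent case I would need to argue that the objective-specific $a_i$ already incorporate $g_{\vartheta_0}$ in such a way that thresholding on $s$ is equivalent to thresholding on the ratio; this is where the explicit forms of the $a_i(\bp)$ given for $\Pi_{any}$, $\Pi_{avg}$, and $\Pi_1$ become essential. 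The remaining steps --- existence of the threshold via continuity (no atoms in the distribution of $s(\bp)$ under $\vartheta_0$) and a standard exchange argument showing any region of equal FWER but different composition cannot have larger objective --- are routine once the ratio interpretation is pinned down.
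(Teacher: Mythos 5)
Your proposal follows essentially the same route as the paper: use the marginally nominal restriction to make the FWER constraints at all configurations with one true null automatic, so that only the global-null constraint binds; replace each $D_i$ by its maximal admissible value $\mI(p_i\leq\alpha)$ to obtain the score $s(\bp)$; and then threshold on the score, with the cutoff $t(\alpha)$ existing by continuity and monotonicity of the global-null rejection probability in $t$. The one place you go beyond the paper is the benefit-to-cost objection in your final paragraph, and it is well taken: the paper's proof simply asserts that ordering by $s(\bp)$ is optimal ``by an argument similar to the Neyman--Pearson lemma,'' whereas the Neyman--Pearson ranking here is by $s(\bp)/g_{\vartheta_0}(\bp)$ (objective mass per unit of global-null FWER mass), so thresholding on $s$ alone is exactly optimal only when $g_{\vartheta_0}$ is constant on the admissible region --- i.e.\ the independent case, which is the one used in all of the paper's worked examples. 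Your proposed repair, that the $a_i$ ``already incorporate $g_{\vartheta_0}$,'' does not work as stated: the explicit $a_i$ involve the alternative density $g_{\vartheta}$ (or marginals thereof), not the null density $g_{\vartheta_0}$. The clean general statement is obtained by replacing the condition $s(\bp)>t(\alpha)$ in \eqref{eq-novelpolicy-1} with $s(\bp)/g_{\vartheta_0}(\bp)>t(\alpha)$, which reduces to the procedure as written when the $p$-values are independent; with that amendment your argument (and the paper's) goes through.
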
 
\begin{proof}
Since only a hypothesis with $p$-value at most $\alpha$ can be rejected,  it follows that  the optimal solution is in the restricted domain $\{(p_1,p_2): p_1\leq \alpha \ \textrm{or} \ p_2\leq \alpha \}$. This restriction guarantees $\textrm{FWER}_{\vartheta}\leq \alpha$ for all $\vartheta\in \Theta$  with $h_1+h_2=1$.  
The global null constraint is also satisfied since $t$ is set so that \eqref{eq-novelpolicy-2} is satisfied. The threshold $t(\alpha)$ necessarily exists and is unique since the left hand side of \eqref{eq-novelpolicy-2} defined for  a general threshold $t\geq 0$ in \eqref{eq-novelpolicy-1} is a function of $t$ that satisfies the following properties: (1) it is continuous and is decreasing in $t$; (2) it is $\geq \alpha$ for $t=0$;  and (3) it is 0 for $t\rightarrow \infty$.   
 Thus, strong FWER control is satisfied with this solution, and all that remains is to show that there does not exist another procedure in the restricted domain that is more powerful. 
 The rejection area with $\bD^*$ includes the highest values of $s(\bp)$ within the restricted domain $\{(p_1,p_2): p_1\leq \alpha \ \textrm{or} \ p_2\leq \alpha \}$. %, for non-increasing $p$-value densities.  
 If the ordering is not by $s(\bp)$, the procedure is necessarily sub-optimal by an argument similar to that used to prove the Neyman-Pearson lemma \citep{Lehmann05b}.%, considering any other candidate solution $\bD$ that makes rejections only in the restricted domain. 
\end{proof}

\subsection{The weakly monotone OMT procedure for strong FWER control}\label{subsec-K2-weaklymonotone}

We aim to solve problem \eqref{opt4} for $K=2$. 
We start by adding the following assumption on the parameter domain: if  null hypothesis $i$ is false,  %the set of $g_i(p_i)$ is parameterized so that 
for any $u>0$ and $\epsilon>0$ there exists a $\vartheta_i(u,\epsilon)$ such that 
for all $\vartheta\in \Theta$ with $\theta_i(\vartheta)<\theta_{i}(\vartheta_i(u,\epsilon))$,
$P_\vartheta(p_i \le u) \ge 1-\epsilon$. 

This restriction can be considered as a weaker version of the standard 1-sided alternative setting, where the set of considered alternatives includes extreme values that yield $p$-values which are arbitrarily close to zero. To demonstrate a specific setting, consider the case where the two test statistics are bivariate normal with correlation $\rho$. Formally, this implies that the two p-values are random variables according to
%\begin{equation}
$$    p_1=\Phi(\theta_1+Z_1) \ \textrm{and} \   p_2=\Phi(\theta_2+\rho Z_1+\sqrt{1-\rho^2}Z_2)$$
where $Z_i$ are iid standard normals, $\Phi$ their cumulative distribution function. Under the null the values of $\theta_i$ are zero and under the alternative they are negative.  If  $\theta_i=0$, the marginal distribution for $p_i$ is uniform, and as $\theta_i$ becomes more negative the distribution of $p_i$ becomes concentrated near 0 as required by the assumption, provided that the range of $\theta_i$ stretches to $-\infty$. The general formulation in \eqref{opt4} allows for cases where the plausible values are restricted to say $\theta_i>-2$ and then the above assumption and ensuing results do not follow, but typically the problem formulation does not include an upper bound on the power (or lower bound on the parameter space) for the normal means problem. 
%If one imposes the weak monotone restriction, for the strong form of FWER to hold is it necessarily the case that  $d(p_1,p_2)=(0,0)$ if $\{(p_1,p_2): p_1\leq \alpha \  \textrm{ or } \ p_2\leq \alpha \}$? Consider $\alpha=0.05$, $\rho=-0.95$ and  $d(p_i)=1$ if $p_i \le 0.025$; $d(p_1)=1$ if $p_1+p_2 \le .105$ and $p_1 \le p_2$; $d(p_2)=1$ if $p_1+p_2 \le .105$ and $p_2 \le p_2$; otherwise $d(p_i)=0$ for $i=1,2$

 \begin{theorem}
Under the distributional assumptions above, if  $a_i(\bp)$ in the score function is non-increasing in each coordinate for $i=1,2,3$, then
the solution to problem \eqref{opt4}  coincides with that of problem \eqref{opt4-nominal}, i.e., the optimal procedure is the one stated in Theorem \ref{prop3-K2}.
\end{theorem}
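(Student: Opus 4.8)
The plan is to establish that both programs share the single optimizer $\bD^*$ of \eqref{eq-novelpolicy-1}--\eqref{eq-novelpolicy-2} by a two-sided feasibility sandwich: first I would show that $\bD^*$ is itself weakly monotone, hence feasible for \eqref{opt4}; then I would show that \emph{every} weakly monotone, strong-FWER-controlling procedure is automatically marginally nominal $\alpha$, hence feasible for \eqref{opt4-nominal}. Since the two programs share the objective $\Pi$, the feasible region of \eqref{opt4} is thereby squeezed to lie inside that of \eqref{opt4-nominal} while still containing $\bD^*$; as $\bD^*$ is optimal for \eqref{opt4-nominal} by Theorem \ref{prop3-K2}, it must then be optimal for \eqref{opt4} as well.

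For the first inclusion I would show that $s(\bp)$ is non-increasing in each coordinate. Each factor $\mI(p_i\le\alpha)$ and $\mI(\max(p_1,p_2)\le\alpha)=\mI(p_1\le\alpha)\mI(p_2\le\alpha)$ is non-increasing, and by hypothesis each $a_i(\bp)$ is non-increasing; since the $a_i$ produced by the power objectives are non-negative (being densities or fractions thereof), each product $\mI(p_i\le\alpha)a_i(\bp)$ is a product of non-negative non-increasing factors and is therefore non-increasing, so $s$ is non-increasing. Hence the super-level set $\{s>t(\alpha)\}$ is a lower set: $\bp\preceq\bq$ implies $s(\bp)\ge s(\bq)$. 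Combined with $\mI(p_i\le\alpha)\ge\mI(q_i\le\alpha)$, this gives $D_i^*(\bp)=\mI(p_i\le\alpha)\mI(s(\bp)>t(\alpha))\ge D_i^*(\bq)$ for $i=1,2$, i.e.\ $\bD^*$ is weakly monotone.

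The crux, and the step I expect to be the main obstacle, is the second inclusion: weak monotonicity together with strong FWER control must force the marginally nominal property, and this is exactly where the distributional assumption enters. I would argue by contradiction. Suppose $\{\bp:p_1>\alpha,\ D_1(\bp)=1\}$ has positive measure; since $\{p_2=0\}$ is null, pick $\bp^0=(p_1^0,p_2^0)$ in it with $p_1^0>\alpha$ and $p_2^0>0$. Weak monotonicity then forces $D_1(\bp)=1$ on the whole rectangle $[0,p_1^0]\times[0,p_2^0]$. Now choose $\vartheta$ with $h_1=0$ (so $p_1\sim U[0,1]$ marginally, unaffected by $\theta_2$) and $\theta_2$ driven negative enough, via $\vartheta_2(p_2^0,\epsilon)$, that $\mP_\vartheta(p_2\le p_2^0)\ge 1-\epsilon$. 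Because $h_1=0$ and $h_2=1$, rejecting hypothesis $1$ is the only false rejection, so $\textrm{FWER}_\vartheta=\mP_\vartheta(D_1=1)\ge \mP_\vartheta(p_1\le p_1^0,\ p_2\le p_2^0)\ge \mP_\vartheta(p_1\le p_1^0)-\epsilon = p_1^0-\epsilon$, which exceeds $\alpha$ once $\epsilon<p_1^0-\alpha$, contradicting strong FWER control. The symmetric argument handles $D_2$, so every feasible point of \eqref{opt4} satisfies $D_i(\bp)=0$ whenever $p_i>\alpha$.

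Finally I would assemble the sandwich: $\bD^*$ lies in the feasible set of \eqref{opt4} by the first inclusion, and that feasible set lies inside the feasible set of \eqref{opt4-nominal} by the second; since $\bD^*$ maximizes $\Pi$ over the latter by Theorem \ref{prop3-K2}, it a fortiori maximizes $\Pi$ over the smaller feasible set of \eqref{opt4}, so the two optimizers coincide. The delicate points to watch are the measure-zero bookkeeping (ensuring a positive-measure violation delivers a genuine non-degenerate rectangle, and that the marginal law of $p_1$ under $h_1=0$ is uniform and independent of $\theta_2$), and the silent non-negativity of the $a_i$ on which the monotonicity of $s$ relies.
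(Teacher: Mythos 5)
Your proposal is correct and follows essentially the same route as the paper: the key step (weak monotonicity plus strong FWER control forces the marginally nominal property, via the rectangle $[0,p_1^0]\times[0,p_2^0]$ and an alternative $\vartheta$ concentrating $p_2$ near zero) is exactly the paper's Lemma~\ref{prop1-K2} proved in Appendix~B, and the monotonicity of $s$ from the monotonicity of the $a_i$ matches the paper's argument. Your explicit attention to the non-negativity of the $a_i$ and the measure-zero bookkeeping is a minor refinement the paper leaves implicit.
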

\begin{proof}
 We shall make use of the following  Lemma, which is proved in  Appendix B. 
 \begin{lemma}\label{prop1-K2}
%If there exists a  value $\theta_i$ which is the limit of a series of parameters $\theta_{ki} \in \Theta_i, k=1,2,\ldots$, such that the $p$-value density at $\theta_{ki}$, $g_{\theta_{ki}}$, converges to the dirac delta function $\delta_0$, i.e., $$\lim_{k\rightarrow \infty} g_{\theta_{ki}}(p) = \delta_0(p_i) \ \forall p_i\in [0,1],$$   then the solution to problem \eqref{opt4} with $Err_{\vec \theta}(\vec D)=FWER_{\vec \theta}(\vec D)$ satisfies $ D_i^*(\vec p) =0$ if $ p_i> \alpha$, for $i=1,2$.
Under the distributional assumptions above, %of the $p$-values  
the solution to problem \eqref{opt4} satisfies $ D_i^*(\bp) =0$ if $ p_i> \alpha$, for $i=1,2$.
\end{lemma}

The lemma implies that the solution to problem \eqref{opt4} will coincide with that of problem \eqref{opt4-nominal}, i.e., the optimal procedure is the one stated in Theorem \ref{prop3-K2}, if the resulting procedure is weakly monotone. 
 The  procedure stated in Theorem \ref{prop3-K2} is indeed weakly monotone  since   $s(\bp)$ is non-increasing in each coordinate if $a_i(\bp)$ is non-increasing in each coordinate for $i=1,2,3$. Thus, if $ \bq\succeq \bp$ then $s(\bp)\geq s(\bq)$, and $\bp$ will enter the rejection set before $\bq.$
\end{proof}

\begin{remark}
For the objectives considered in \S~\ref{subsec-objective}, $a_i(\bp)$ is non-increasing in each coordinate for $i=1,2,3$ for independent $p$-values, if their marginal densities are non-increasing. This is the case when the $p$-values  come from test statistics that satisfy the monotone likelihood ratio (MLR). For example,  any one-parametric exponential family fulfils this MLR property with respect to its sufficient statistic \citep{Lehmann05}. 
\end{remark}

\subsection{Power objectives that result in existing procedures }\label{subsec-connections}
We consider the one-sided normal means problem ($\theta_i\geq 0$ if $h_i=0$ and $\theta_i< 0$ if $h_i=1$), when the sample sizes are the same for each hypothesis and the %alternative parameter in the objective  is the same, so the  
non-null distribution of $p_i$ is the same for $i=1,2$ for the data generation implied by the objective. For simplicity, we assume the $p$-values are independent, so the score $s(\bp)$ simplifies to the following% for $\vartheta=(\theta, \theta)$, with $\theta<0$
: 
\begin{eqnarray}
&& \mI(\max(p_1,p_2)\leq \alpha)
 \times e^{\Phi^{-1}(p_1) \theta-\frac{\theta^2}{2}}\times  e^{\Phi^{-1}(p_2) \theta-\frac{\theta^2}{2}}
 \  \textrm{ for }  \ \Pi_{any} \label{eq-piany}\\ && \frac12(\mI(p_1\leq \alpha)+\mI(p_2\leq \alpha))\times e^{\Phi^{-1}(p_1) \theta-\frac{\theta^2}{2}}\times  e^{\Phi^{-1}(p_2) \theta-\frac{\theta^2}{2}} \  \textrm{ for }  \ \Pi_{avg}  \label{eq-pitotal-fixed} \\ &&  \frac12(\mI(p_1\leq \alpha)\times e^{\Phi^{-1}(p_1) \theta-\frac{\theta^2}{2}}+\mI(p_2\leq \alpha)\times  e^{\Phi^{-1}(p_2) \theta-\frac{\theta^2}{2}})\  \textrm{ for }  \ \Pi_{1},  \label{eq-pi1}
\end{eqnarray}

 were $\Phi^{-1}$ is the quantile function of the standard normal distribution. 

We point out three interesting connections of  the weakly monotone OMT procedure to existing procedures: 1)  Hommel's procedure is the OMT procedure for objective $\Pi_1$,  when the difference between the alternative and the (boundary) null parameter is not too large; 2)  Bittman's consonant improvement over closed-Stouffer \citep{Bittman09} is the weakly monotone OMT procedure for objective $\Pi_{any}$ ; 3) Without the weak monotonicity constraint, the general
OMT procedure for $\Pi_{any}$ is to reject the minimal $p$-value if the intersection hypothesis is rejected at level $\alpha$ using Stouffer's test \citep{Rosenblum14b, Rosset18}, so the weakly monotone OMT procedure %for objective $\Pi_{any}$ 
is necessarily less powerful than this OMT procedure. Specifics follow. 

Hommel's procedure is $D_i = \mI(p_i\leq \alpha/2 \cup \max(p_1,p_2)\leq \alpha), i=1,2.$  If the score $s(\bp)$ is smaller outside Hommel's rejection region  than the score inside it, i.e., 
\begin{equation}\label{eq-hommelequiv}
s(\bp)<s(\alpha,\alpha) \ \forall \ \bp \in \left((\frac \alpha 2, \alpha]\times [\alpha,1]\right) \cup \left([\alpha,1]\times (\frac \alpha 2, \alpha]\right),
\end{equation}
then since Hommel's rejection region is exactly $\alpha$ when both null hypotheses are true, no additional points can be added to the rejection region of the weakly monotone OMT procedure while still maintaining strong FWER control at level $\alpha$. Hence the weakly monotone OMT procedure coincides with Hommel's procedure. For normal means it is easy to see that for a fairly wide range of $\theta$'s  the inequality in \eqref{eq-hommelequiv} is satisfied   for $\alpha/2<p_1\leq \alpha,\; p_2>\alpha$: since $s(\alpha,\alpha) =  e^{\Phi^{-1}(\alpha) \theta-\frac{\theta^2}{2}}$, then $s(p_1,p_2) =\frac 12 e^{\Phi^{-1}(p_1) \theta}\leq \frac 12 e^{\Phi^{-1}(\alpha/2) \theta-\frac{\theta^2}{2}}$, and 
$$ e^{\Phi^{-1}(\alpha) \theta-\frac{\theta^2}{2}}>\frac 12 e^{\Phi^{-1}(\alpha/2) \theta-\frac{\theta^2}{2}} \ \iff \ \theta> \frac{-\log 2}{\Phi^{-1}(\alpha)-\Phi^{-1}(\alpha/2)}  .$$ For example, if $\alpha = 0.025$ then the weakly monotone OMT procedure coincides with Hommel for $\theta>-2.46$. Note that for stronger alternatives (i.e., $\theta<-2.46$)  the rejection region varies with $\theta$.  

The closed-Stouffer procedure is $D_i = \mI(p_i\leq \alpha)\times \mI( \Phi^{-1}(p_1)+\Phi^{-1}(p_2)\leq \sqrt{2} \Phi^{-1}(\alpha)), i=1,2$. This procedure is sub-optimal for $\Pi_{any}$ since the FWER is controlled at a level  smaller than $\alpha$. \cite{Bittman09} suggested instead the procedure $D_i = \mI(p_i\leq \alpha)\times \mI( \Phi^{-1}(p_1)+\Phi^{-1}(p_2)\leq t(\alpha)), i=1,2,$ where $t(\alpha)$ is such that the rejection probability at the (boundary) null parameter value (i.e., when $\theta=0$) is exactly $\alpha$. This procedure dominates closed-Stouffer since $t(\alpha)> \sqrt{2} \Phi^{-1}(\alpha), $ and it is identical to the weakly monotone OMT procedure for objective $\Pi_{any}$. Thus,  the weakly monotone OMT procedure does not vary with $\theta$, so for two one-sided normal alternatives it is uniformly most powerful among all marginally nominal $\alpha$ or weakly monotone procedures. 

The OMT procedure for $\Pi_{any}$  is to reject the smallest $p$-value if $\Phi^{-1}(p_1)+\Phi^{-1}(p_2)\leq \sqrt{2} \Phi^{-1}(\alpha)$ \citep{Rosenblum14b, Rosset18}. This procedure is not weakly monotone, since, for example, with realization $\bp = (\alpha/2+\alpha^2/4, \alpha/2-\alpha^2/4)$ only the second hypothesis is rejected, and with realization   $\bq = (\alpha/3-\alpha^2/4, \alpha/3+\alpha^2/4)$ only the first hypothesis is rejected. Thus, even though $\bq\preceq \bp$ for $\alpha<1/3$, the weakly monotone requirement that $\bD(\bq)\succeq \bD(\bp)$ is violated.  Moreover, the OMT procedure rejects the hypothesis with minimal $p$-value even if its value is greater than $\alpha$, as long as $\Phi^{-1}(p_1)+\Phi^{-1}(p_2)\leq \sqrt{2} \Phi^{-1}(\alpha)$, so it is not a marginally nominal $\alpha$ procedure. So for $\Pi_{any}$, the uniformly most powerful procedure %(in  problem \eqref{opt3} with Err = FWER) 
is more powerful than the solution to problem
\eqref{opt4-nominal} or \eqref{opt4}.

\section{Numerical examples}\label{sec-numerical}
We consider the following objectives: when both null hypotheses are false, the probability of at least one true discovery ($\Pi_{any}$), and the average expected number of true discoveries ($\Pi_{avg}$); the expected total number of true discoveries when the two group model prior on $h_i$ is $\sim Ber(1/2)$, $i=1,2$  ($\Pi_1$); finally, a combination of objectives, $1/3\times \Pi_{any} + 2/3 \times \Pi_1$, which coincides with $\mP(h_1D_1+h_2D_2>0\mid h_1+h_2>0)$ when the two group model prior is $Ber(1/2)$.%,   if $h_i\sim Ber(1/2), i=1,2$ are independent%, each with prior probability 1/2 of being a false null hypothesis. 

We base the examples on the context provided in \S~\ref{subsec-APEX}, by assuming a baseline  event rate of 7.5\% and an expected relative reduction of 35\% for each group when the drug that defines the group is effective. We assume first that the  sample sizes are the same in each group. In this  exchangeable setting  the two hypothesis testing problems are identical. We consider next the optimal sample allocation for a total sample size $N$, divided into $r\times N$ patients in the first group and $(1-r)\times  N$ patients in the second group. An intuitive guess is that since the expected relative reduction is the same across groups,  then the optimal allocation is an equal split, i.e., $r=1/2$. However, when $N$ is small, so that the probability of rejecting a hypothesis is small even when all $N$ samples are allocated to a single hypothesis, we find that $r=1/2$ is sub-optimal.

Table \ref{tab-powerexchangeable} shows the power comparison for $r=1/2$ and $N=4800$. In this ``strong signal'' exchangeable setting, Hommel's procedure is not the weakly monotone OMT procedure for objective $\Pi_1$, but it is still almost as powerful as the optimal procedure.  %$2\times e^{\Phi^{-1}(0.025)\times (-2.67)}><e^{\Phi^{-1}(0.025/2)\times (-2.67)},$ where the signal to noise ratio is assumed to be 
%$ \frac{\theta_2-\theta_1}{\sqrt{\theta_1*(1-\theta_1)/n+\theta_2*(1-\theta_2)/n}} = -2.67$ for $\theta_2=0.65\times \theta_1, \theta_1 = 0.075, n=1200$. 
However, with objective $\Pi_{any}$ the power advantage over Hommel's procedure is more than $4\%$. Arguably, the procedure with the most satisfactory power properties is with objective $1/3\times \Pi_{any} + 2/3 \times \Pi_1$, since it dominates Hommel by more than $3\%$ for $\Pi_{any}$, and its power is only  $0.4\%$ lower than  Hommel's for $\Pi_1$. This procedure is fairly close to optimal for power measures $\Pi_{avg}, \Pi_1,$ and $\Pi_{any}$, in addition to being optimal for its own objective. 

 \begin{table}[ht]
 \centering
 \caption{\label{tab-powerexchangeable} For exchangeable hypotheses, each with power 85\% of being detected on its own (corresponding to $n=1200$ subjects in each of the four arms in the APEX study described in \S~\ref{subsec-APEX}),  for each procedure (column) with strong control FWER guarantee at the $\alpha=0.025$ level, four measures of power (rows).   The OMT procedure for $\Pi_{avg}$ coincides with that of $\Pi_{any}$, so they are presented in the same column.  In bold largest power in the row. %The OMT procedure for objective $\Pi_1$ has a slightly higher $\Pi_1$ power than Hommel, but they have the same value up to the third decimal point.
 % The power measures are: when both hypotheses are false the average number of rejected hypotheses (row 1); when both hypotheses are false the probability of rejecting at least one hypothesis (row 2);  when exactly one null hypothesis is false, with prior probability 1/2, the probability of rejecting it (row 3); when at least one hypothesis is false, with prior probability 1/3 for each possibility (both, first only, second only), the probability of rejecting at least one false null (row 4).
 }
 \begin{tabular}{lrrrrr}\hline
  & \multicolumn{3}{c}{OMT for Objective} & \multicolumn{2}{c}{Off-the-shelf competitor} \\
 Power measure & $\Pi_{avg}/\Pi_{any}$ &   $\Pi_{1}$ & $1/3\times \Pi_{any}  $ & closed Stouffer & Hommel \\ 
  &&&$+2/3 \times \Pi_1$&&\\
   \hline
 $\Pi_{avg}$  & {\bf 0.747} & 0.725  & 0.741 & 0.744 & 0.725\\ 
  $\Pi_{any}$  & {\bf 0.928} & 0.885  & 0.916 & 0.921 & 0.885\\ 
    $\Pi_{1}$ & 0.557 & {\bf 0.670}  & 0.665 & 0.448 & 0.670 \\ 
    $1/3\times \Pi_{any} + 2/3 \times \Pi_1$ & 0.681 & 0.741  & {\bf 0.749} & 0.606 & 0.741\\ 
   \hline    
\end{tabular}
\end{table}

Table \ref{tab-splitsample} compares the power with $r=1/2$ and $r=1/4$ for a large total sample size of $N=4800$ and for a small total sample size of $N=600$. For the larger sample size, the power is greatest for $r=1/2$ for all objectives but $\Pi_{any}$. However, when the sample size is small, $r=1/4$ has the largest power for all objectives.  Thus it seems that for all power measures except $\Pi_{any}$, for the level of power typically desired in clinical trials, an equal split is preferred for $K=2$, but this may not be the case for other applications that may have low  power.

\begin{table}[ht]
 \centering
 \caption{\label{tab-splitsample} For an equal and unequal split of sample size, the power of the weakly monotone OMT procedure as well as closed-Stouffer and Hommel. Each procedure guarantees strong FWER control at the $\alpha=0.025$ level. In bold largest power in the row. The OMT procedure for $r=1/2$ and $N=4800$ for objective $\Pi_1$ has a slightly higher power than Hommel, but they have the same value up to the third decimal point.  % The power measures are: when both hypotheses are false the average number of rejected hypotheses (row 1); when both hypotheses are false the probability of rejecting at least one hypothesis (row 2);  when exactly one null hypothesis is false, with prior probability 1/2, the probability of rejecting it (row 3); when at least one hypothesis is false, with prior probability 1/3 for each possibility (both, first only, second only), the probability of rejecting at least one false null (row 4).
 }
 \begin{tabular}{llcccccc}
 \hline
&   & \multicolumn{3}{c}{$r=1/2$} & \multicolumn{3}{c}{$r=1/4$} \\
Total  &   Power  & OMT   &   closed  & Hommel &  OMT &   closed  & Hommel \\ 
sample size &    measure &    &   Stouffer &  &  &    Stouffer &  \\ 
 
   \hline
 $N=4800$ & $\Pi_{avg}$   &{\bf 0.747}& 0.744 & 0.725 &0.684& 0.675 & 0.665\\ 
  & $\Pi_{any}$   &0.928& 0.921 & 0.885& {\bf 0.943} & 0.925 & 0.905\\ 
    &$\Pi_{1}$  &{\bf 0.670} & 0.448 & 0.670&0.611& 0.426 & 0.608\\ 
    &$1/3\times \Pi_{any} + 2/3 \times \Pi_1$  & {\bf 0.748} & 0.605 & 0.740&0.713 & 0.592 & 0.707\\ 
   \hline    
   $N=600$ & $\Pi_{avg}$   & 0.132 & 0.112 & 0.106 &{\bf 0.133}& 0.106 & 0.106 \\ 
  & $\Pi_{any}$   &0.240 & 0.200 & 0.189 &{\bf 0.246}  & 0.191 & 0.192 \\ 
    &$\Pi_{1}$  &0.099 & 0.060 & 0.099 &{\bf 0.109} & 0.061 & 0.100\\ 
    &$1/3\times \Pi_{any} + 2/3 \times \Pi_1$  &  0.141 & 0.107 & 0.129&{\bf 0.150} & 0.105 & 0.131 \\  
  \hline 
\end{tabular}
\end{table}

For $\Pi_{any}$ the optimal split is always $r\in \{0,1\}$. This follows since the most powerful test of the global null against the alternative that both null hypotheses are false is the same regardless of $r\in[0,1]$, and $\Pi_{any}$ is upper bounded by this power. This bound is only  achieved when $r=0$ or $r=1$ exactly. This result is formalized in a proposition in  Appendix C. It suggests that  $\Pi_{any}$ is not a suitable objective when considering optimizing $r$, since, unlike all other measures, it is unaffected by which or how many null hypotheses are rejected provided that  at least one is rejected.

Finally,  we computed the relative saving in comparison to using Hommel's procedure to achieve  that same power, i.e., $\frac{N_{Hommel}-4800}{N_{Hommel}}\times 100$ for the OMT power achieved for each of the four power measures  with $N=4800$ (Table \ref{tab-splitsample}, column 3, rows 1--4).  The greatest  saving is with regard to power measure $\Pi_{any}$, 9.91\% (since 5328 subjects are needed to achieve at least as much power as with the OMT procedure). With power measures  $\Pi_{avg}$ and $1/3\times \Pi_{any} + 2/3 \times \Pi_1$ the relative saving is only    1.56\% and 0.74\%, respectively (since 4876 and 4836   subjects are needed to achieve the OMT power for $\Pi_{avg}$ and $1/3\times \Pi_{any} + 2/3 \times \Pi_1$, respectively).

\section{Optimal design for the APEX trial}\label{sec-APEX}

We return to the APEX trial introduced in \S~\ref{subsec-APEX}. 
In addition to strong FWER control, we required that the procedure be weakly monotone, since it seems reasonable to require that a smaller pair of $p$-values should lead to at least as many rejections as a larger pair of $p$-values.
So the resulting OMT procedure is the  one  developed  in \S~\ref{sec-K2}. Figure \ref{fig-rejectionregion2} shows the OMT procedures for each objective. The sample size is bigger for group 1 than for group 2, so the rejection procedure is not symmetric in the $p$-values (unlike the symmetric procedures of the competitors Hommel and closed-Stouffer). 

Table \ref{tab-powerexample-full} shows a power comparison between these procedures, where we clearly see that which is best depends on the   power objective (which implies a data generation mechanism). A comparison of the two off-the-shelf procedures shows that if both null hypotheses are false, the expected number of true discoveries, as well as the probability of rejecting at least one false null, is slightly larger for closed-Stouffer, but  if only one null is false, the probability of discovering it is far greater using Hommel's procedure. 
The OMT procedure for $1/3\times \Pi_{any} + 2/3 \times \Pi_1$ dominates Hommel. This procedure is also a close second to any of the other three procedures, i.e., it is better than the OMT procedure for $\Pi_1$ when optimizing $\Pi_{any}$ or $\Pi_{avg}$, and it is much better than the OMT procedure for $\Pi_{any}$ or $\Pi_{avg}$ when optimizing $\Pi_1$. Arguably, if  it is unclear which objective is desired the OMT procedure for $1/3\times \Pi_{any} + 2/3 \times \Pi_1$ seems flexible to the underlying state of the problem and hence it appears to be  the preferred choice. Hommel seems to be a better off-the-shelf procedure than closed Stouffer. Moreover, the OMT procedure also provides a benchmark as to how well a procedure can possibly perform; for example, it informs the researcher that using a closed Stouffer procedure if the objective is $\Pi_1$ is not desirable as there is a loss of power of $0.20$. %, and if we desire a simpler off-the-shelf package, this analysis helps clarify that the design choice of Hommel is quite reasonable (thus concurring with the suggestion of \citealt{Dmitrienko18}).

 \begin{table}[ht]
 \centering
 \caption{\label{tab-powerexample-full} For non-exchangeable hypotheses, each with the realized sample size in the APEX study and the  effect used for the sample size calculation (described in \S~\ref{subsec-APEX}), for each procedure (column) with strong control FWER guarantee at the $\alpha=0.025$ level, four measures of power (rows). The OMT procedure for $\Pi_{avg}$ coincides with that of $\Pi_{any}$, so they are presented in the same column. In bold largest power in the row.% The power measures are: when both hypotheses are false the average number of rejected hypotheses (row 1); when both hypotheses are false the probability of rejecting at least one hypothesis (row 2);  when exactly one null hypothesis is false, with prior probability 1/2, the probability of rejecting it (row 3); when at least one hypothesis is false, with prior probability 1/3 for each possibility (both, first only, second only), the probability of rejecting at least one false null (row 4).
 }
 \begin{tabular}{lrrrrr}%https://www.overleaf.com/project/5ff763eba3745e1051d38cb6
 \hline
  & \multicolumn{3}{c}{OMT for Objective} & \multicolumn{2}{c}{Off-the-shelf competitor} \\
 Power measure & $\Pi_{avg}/\Pi_{any}$ & $\Pi_{1}$ & $1/3\times \Pi_{any}  $ & closed Stouffer & Hommel \\ 
 & &&$+2/3 \times \Pi_1$&&\\
   \hline
 $\Pi_{avg}$  & {\bf 0.833}  &  0.822 & 0.829 & 0.832 & 0.774 \\ 
  $\Pi_{any}$  & {\bf 0.970}  & 0.949 & 0.963 & 0.968 & 0.953 \\ 
    $\Pi_{1}$ & 0.649 &{\bf 0.777}   & 0.775 & 0.555 & 0.774 \\ 
    $1/3\times \Pi_{any} + 2/3 \times \Pi_1$ & 0.756  & 0.835 & {\bf 0.838}  & 0.693 & 0.833 \\ 
   \hline    
\end{tabular}
\end{table}

\begin{figure}[htbp]
 \begin{tabular}{ccc}
\hspace{-1cm}     \includegraphics[width=4.5cm,height=4.5cm]{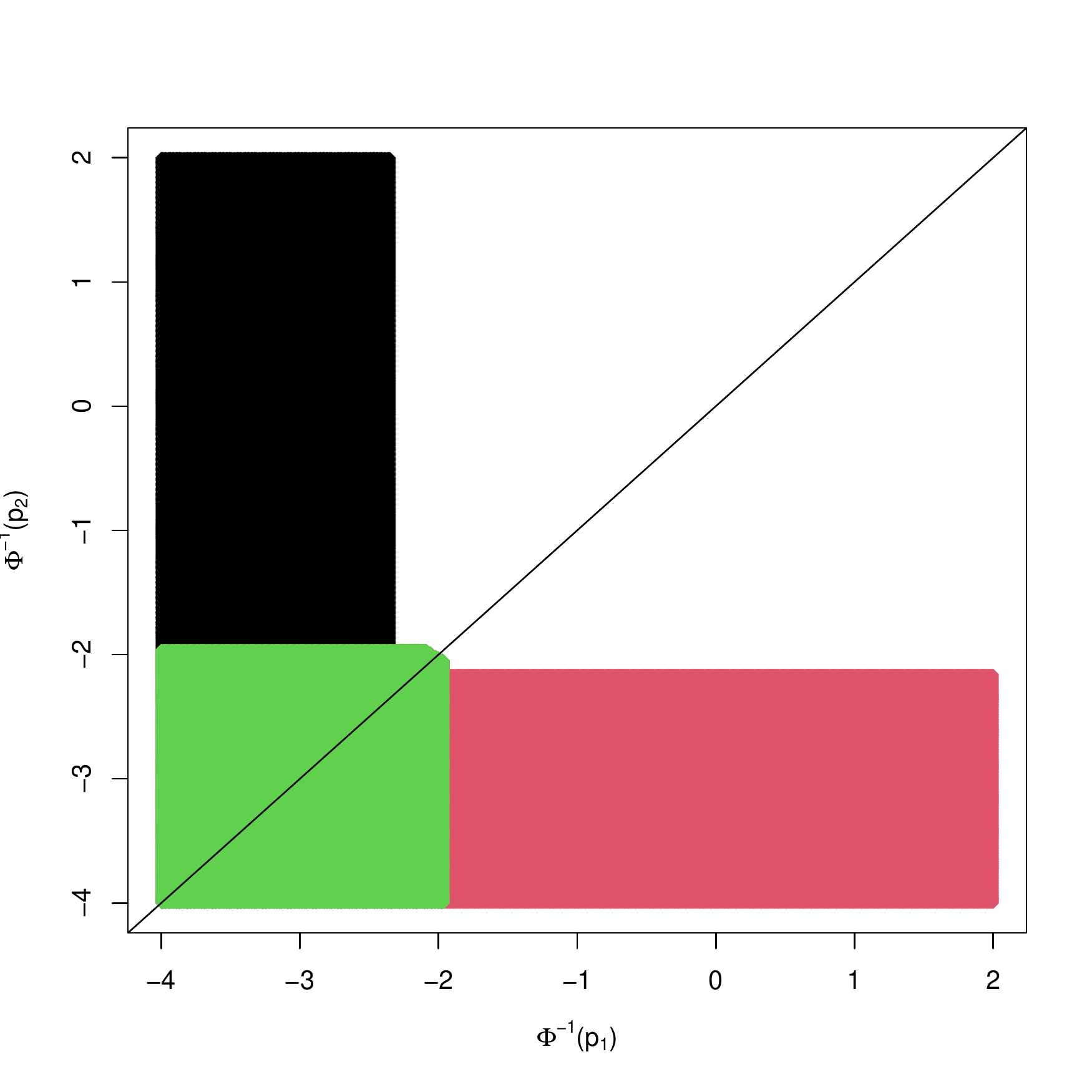} &   \includegraphics[width=4.5cm,height=4.5cm]{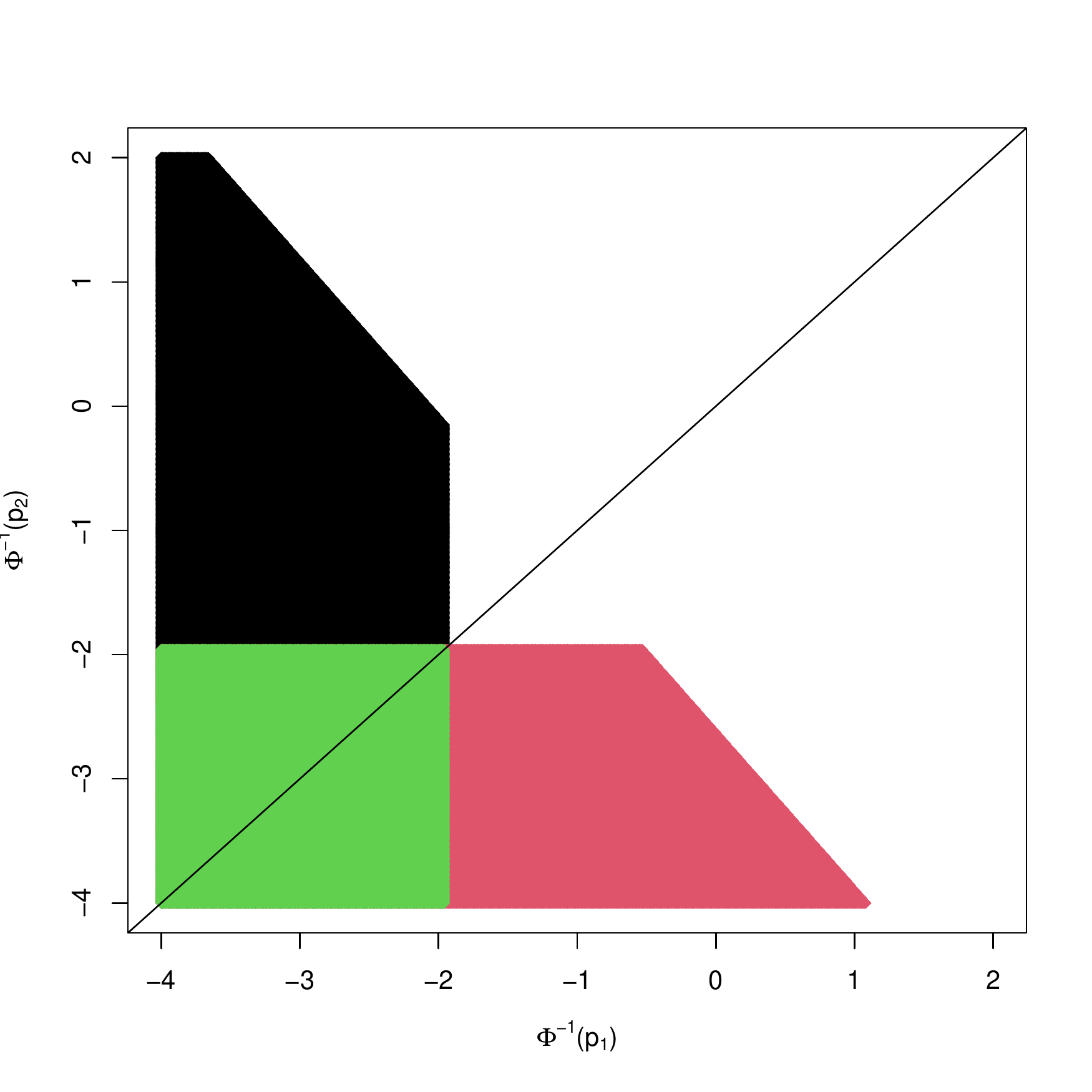} &   
       \includegraphics[width=4.5cm,height=4.5cm]{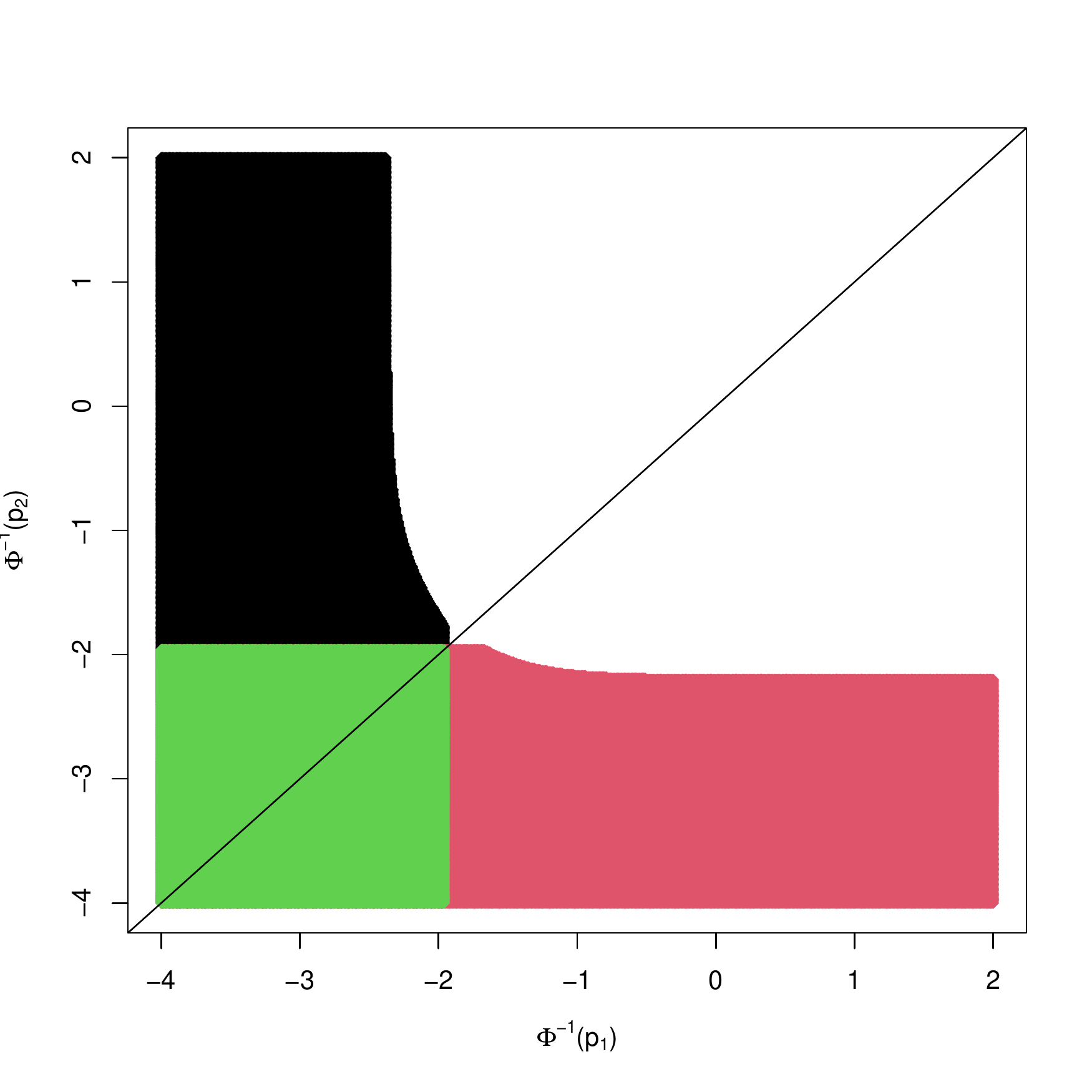} \\ %\includegraphics[width=6cm,height=6cm]{Hommel-alpha-2.5.pdf} &         %\includegraphics[width=6cm,height=6cm]{Stouffer-alpha-2.5.pdf}&
  \end{tabular}
 \caption{\label{fig-rejectionregion2}  Rejection regions for the OMT procedures at FWER level of $0.025$, from left to right: the procedure for $\Pi_1$; the procedure for $\Pi_{any}$, which coincides with that of $\Pi_{avg}$;  the procedure for $1/3\times \Pi_{any} + 2/3 \times \Pi_1$.    In green: reject both hypotheses; in red: reject only the second hypothesis; in black: reject only the first hypothesis. The regions are depicted for the $z$-scores, i.e., for $(\Phi^{-1}(p_1), \Phi^{-1}(p_2))$.  }
 
\end{figure}

\section{Discussion}\label{sec-discuss}
We considered the framework of a clearly defined objective, error constraint, and additional desirable restrictions for the design of an analysis of a clinical study. This framework was originally introduced in \cite{Rosset18},  as the extension of the Neyman-Pearson paradigm for $K=1$ to $K>1$: the problem of finding the OMT procedure $\bD$ was cast as an infinite dimensional optimization problem with an appropriately defined objective and constraints. We showed here that 
OMT solutions can help not only in the analysis, but also in the design of clinical trials, where typically a value of the parameter of interest is assumed in order to guide design aspects such as  decisions about sample size allocations.  This approach stands apart from the typical approach when facing multiple hypotheses in clinical trials: rather than  choosing an off-the-shelf procedure that seems appropriate for the problem at hand, we advocate choosing an appropriate objective and the desired constraints, and then seeking the procedure that optimally solves the resulting problem.   

We showed that for $K=2$, this framework can result in an OMT procedure that coincides with an existing one in some cases, and it can result in novel procedures in other cases. Finding OMT solutions can in some cases  help justify the choice of an off-the-shelf procedure, as in our APEX trial example, where Hommel's procedure turned out to be the preferred off-the-shelf procedure in terms of power for the two hypotheses with differing sample sizes considered, though it is still inferior to the novel procedure that is optimal for a relevant objective. In other cases,  using the OMT procedure rather than an existing procedure can result in reduced costs,  since fewer people may need to be recruited in order to achieve the same power (level of the objective).

The setting with $K=2$ hypotheses highlighted the complexities that arise when moving from $K=1$ to $K>1$:  less clear cut definitions  of the elements of the optimization problem, and greater computational difficulty in finding an OMT solution. 
After imposing additional restrictions that result in rejecting only $p$-values at most $\alpha$, the solution turned out to be computationally easy. The OMT procedure for problems \eqref{opt4-nominal}-\eqref{opt4} is obtained by a simple algorithm, and it  provides the same strong FWER control guarantee as existing procedures. We hope that this algorithm,  together with  carefully formulated objectives,  will be useful for researchers designing clinical trials with two endpoints.
Our framework is not restricted to independent test statistics. 
In the fairly common clinical setting of multiple treated groups compared with a single control group, our framework can lead to useful procedures and design decisions (they can be assessed in comparison with Hommel's procedure, which is valid for positive dependence, \citealt{Sarkar97}). Moreover, the framework can be applied in settings other than clinical trials, where the maximin formulation described at the end of \S~\ref{subsec-objective} may be the most useful, since it is not limited  to a single parameter configuration in the objective but rather provides a range of alternatives of interest.

While the connections to our previous work in \cite{Rosset18} are extensive, it is important to highlight the novel theoretical and methodological contributions of the current paper presented in~\S \ref{sec-K2} . Most importantly, we  relax the exchangeability requirement in \cite{Rosset18} and derive computationally efficient and conceptually simple algorithms. Both of these developments are enabled by our adoption of the weak monotonicity  or  marginally nominal-$\alpha$ requirement. 

In this work we only considered a few objectives and constraints out of many that could be interesting and useful for practitioners. For example, in some settings it may be preferable to impose constraints on the tail probability of the false discovery proportion or on the expected number of false discoveries rather than on the FWER. 

%For simplicity of exposition, we left out of the manuscript the description of the more complicated algorithm for finding an OMT procedure with strong FWER control for $K=2$ (for non-exchangeable as well as exchangeable hypotheses), defined in problem \eqref{opt3}. %However, from our (omitted) numerical investigations in settings similar to those presented in the paper, the power advantage without  additional restrictions   to marginally nominal $\alpha$ or weakly monotone policies, i.e. problems \eqref{opt4-nominal}-\eqref{opt4}, is  small. 
%This was first  noted in \cite{Rosset18} for symmetric decision rules which arise when the hypotheses are exchangeable. Thus  for $K=2$ the OMT procedure suggested here, which has the additional desired properties of being marginally nominal $\alpha$ and weakly monotone, is more attractive than the more computationally demanding solution to \eqref{opt3}. 
%The OMT procedure for problems \eqref{opt4-nominal}-\eqref{opt4} is obtained by a simple algorithm, and it  provides the same strong FWER control guarantee as existing procedures. We hope that this procedure,  together with  carefully formulated objectives,  will be useful for researchers designing clinical trials with two endpoints. 

A natural follow-up is to extend the solution to $K>2$, as well as to more complex designs with primary and secondary hypotheses. 
After specifying the objectives and constraints, finding the optimal test for $K>2$ may not be trivial. 
Nevertheless, one can evaluate off-the-shelf multiple testing procedures to ensure that they satisfy the requisite constraints and use the desired objective to choose from among them. This is in line with 
a major thrust of this paper, which is to emphasize that the choice of a MTP, even from a set of existing  off-the-shelf procedures, is guided by the choices one makes about the objective of interest and constraints that are imposed. 

\section*{Acknowledgements}
This research was supported by Israeli Science Foundation grant 2180/20.

%  If your paper refers to supporting web material, then you MUST
%  include this section!!  See Instructions for Authors at the journal
%  website http://www.biometrics.tibs.org

\appendix

\section{The original design of the APEX trail}

The original trial design \citep{Cohen16-short} was to test the following populations in order: the subgroup with elevated D-dimer level (cohort 1); cohort 1 plus those who were at least 75 years old (cohort 2); all comers (cohort 3). Testing in order  means that if the null hypothesis in a cohort is not rejected at the predefined level $\alpha$, then testing stops and no further discoveries are made. In this trial, the two-sided $p$-value for cohort 1 was above 0.05 and therefore  no efficacy claim can be  made for Betrixapan in any of the patient populations. The analysis in cohorts 2 and 3 can only be considered exploratory despite the fact that the two-sided $p$-values for these cohorts were below 0.05.

\section{Proof of Lemma 1}
\begin{proof}
Assume the Lemma does not hold. WLOG  there exists $(c_1,c_2)$ with $c_1>\alpha$ such that $D_1^*(c_1,c_2)=1$. By weak monotonicity this implies that $D_1^*(c,c_2)=1$ for all $c \le c_1$. Assume that the first null hypothesis is true and the second is false. Let $\epsilon=\frac{c_1-\alpha}{2}$. By our assumption, there exists a $\vartheta_2$ such that $P_{\vartheta_2}(p_2>c_2)\le\epsilon$ and $P_{\vartheta_2}(p_1\leq c_1)=c_1$. Hence
$$
c_1=P_{\vartheta_2}(p_1 \le c_1)=P_{\vartheta_2}(p_1 \le c_1, p_2 \le c_2)+P_{\vartheta_2}(p_1 \le c_1, p_2 > c_2)\le P_{\vartheta_2}(p_1 \le c_1, p_2 \le c_2)+\epsilon.
$$
This implies that $P_{\vartheta_2}(p_1 \le c_1, p_2 \le c_2) \ge c_1-\epsilon= \frac{c_1+\alpha}{2}>\alpha$ which violates strong FWER.
\end{proof}

\section{The optimal split for $\Pi_{any}$ when testing two independent normal means }

\begin{prop}
 For $K=2$ identical one-sided independent normal means problems, with $\btheta_A = (\theta, \theta)$,  and  $\Pi_{\btheta_A} = \Pi_{any, \btheta_A}$, suppose we can allocate a total sample size of $N$ independent observations. Then, the optimal power among all procedures that satisfy strong FWER control and are marginally level $\alpha$,  is achieved when $r$ is $0$ or $1$, for any fixed total sample size $N$.  
\end{prop}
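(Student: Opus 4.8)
The plan is to read $\Pi_{any}$ as the power of a single test of the global null, whose attainable value is governed by Neyman--Pearson, and then to show that this value is the same for every allocation $r$ but can be realized by an admissible procedure only at the endpoints.

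First I would reduce the problem to a simple-versus-simple testing problem. Writing the combined rejection region as $R=\{\bp:D_1(\bp)+D_2(\bp)>0\}$, under $\btheta_A$ both nulls are false so $\Pi_{any}=\mP_{\btheta_A}(R)$, while under the global null $\vartheta_0$ every rejection is false, so strong FWER control gives $\mP_{\vartheta_0}(R)\le\alpha$. Hence $R$ is the rejection region of a level-$\alpha$ test of $\vartheta_0$ against $\btheta_A$, and the Neyman--Pearson lemma bounds $\Pi_{any}$ above by the power $B$ of the likelihood-ratio test. In the coordinates $w_i=\Phi^{-1}(p_i)$, where $w_1,w_2$ are independent $N(0,1)$ under $\vartheta_0$ and independent $N(\theta_i,1)$ (with $\theta_i<0$) under $\btheta_A$, this LR test rejects on the half-plane $\{\theta_1 w_1+\theta_2 w_2>c\}$, and a direct computation shows $B$ depends on $(\theta_1,\theta_2)$ only through $\theta_1^2+\theta_2^2$.

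Second I would establish the invariance of the bound. Since the per-observation effect is common to the two groups, the marginal noncentralities scale as $\theta_i\propto\sqrt{n_i}$ with $n_1=rN,\ n_2=(1-r)N$, so that $\theta_1^2+\theta_2^2\propto n_1+n_2=N$ is constant in $r$; thus $B=B(N)$ for every $r\in[0,1]$. At $r\in\{0,1\}$ one group receives no data and the problem collapses to a one-sided UMP test carried out with all $N$ observations: this test is marginally level $\alpha$, controls FWER strongly, and attains $B(N)$ exactly.

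The crux, and the step I expect to be the main obstacle, is to prove strict non-attainment for interior $r$, which forces the optimum to the endpoints. The marginally nominal $\alpha$ requirement confines every admissible rejection region to the L-shaped set $\{w_1\le\Phi^{-1}(\alpha)\}\cup\{w_2\le\Phi^{-1}(\alpha)\}$, whereas the LR region is a half-plane. Writing $a_i=-\theta_i>0$ and fixing the threshold from $\mP_{\vartheta_0}(a_1 w_1+a_2 w_2<\kappa)=\alpha$ gives $\kappa=\Phi^{-1}(\alpha)\sqrt{a_1^2+a_2^2}$; the half-plane avoids the forbidden quadrant $\{w_1>\Phi^{-1}(\alpha),\ w_2>\Phi^{-1}(\alpha)\}$ exactly when $\kappa\le(a_1+a_2)\Phi^{-1}(\alpha)$, i.e., after dividing by the negative number $\Phi^{-1}(\alpha)$, when $\sqrt{a_1^2+a_2^2}\ge a_1+a_2$. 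Because $\sqrt{a_1^2+a_2^2}<a_1+a_2$ whenever both $a_i>0$, the half-plane protrudes into the forbidden quadrant on a set of positive measure for every $r\in(0,1)$, so no admissible region can coincide (up to null sets) with the LR region. Invoking the strict, almost-everywhere-uniqueness form of Neyman--Pearson --- applicable because $a_1 w_1+a_2 w_2$ is continuously distributed --- then yields $\Pi_{any}<B(N)$ for all interior $r$, and together with attainment at $r\in\{0,1\}$ this proves the claim. The points requiring care are the legitimacy of capping $\mP_{\vartheta_0}(R)$ by $\alpha$ via FWER at the single configuration $\vartheta_0$, the invariance $\theta_1^2+\theta_2^2\propto N$, and converting the geometric containment failure into a strict power gap.
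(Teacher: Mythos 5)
Your argument is correct and follows essentially the same route as the paper's proof: bound $\Pi_{any}$ by the Neyman--Pearson power of the most powerful level-$\alpha$ test of the global null against $\btheta_A$ (which, since the pooled noncentrality $\theta_1^2+\theta_2^2\propto N$ is invariant in $r$, is the same bound for every allocation), observe that it is attained at $r\in\{0,1\}$, and argue that the marginal level-$\alpha$ restriction forces strict sub-optimality for interior $r$. Your explicit geometric verification that the likelihood-ratio half-plane protrudes into the quadrant $\{p_1>\alpha,\,p_2>\alpha\}$ for every $r\in(0,1)$, combined with the a.e.-uniqueness form of Neyman--Pearson, actually supplies a rigorous justification for the strict inequality that the paper only asserts in a closing parenthetical.
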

\begin{proof}
Let $rN$ and $(1-r)N$ be the sample sizes, and $\overline X_{1,rN}$ and $\overline X_{2,(1-r)N}$ the sample means for the two hypotheses (for simplicity, assume the sample variance is one). Then
\begin{eqnarray}
&& \hspace{-1cm} \Pi_{any} = \mP_{(\theta,\theta)}(\max(D_1,D_2)=1) \leq \mP_{(\theta, \theta)}\left(\sqrt {N}(r\overline{X}_{1,rN}+(1-r) \overline{X}_{2,(1-r)N})\leq \Phi^{-1}(\alpha)\right).\label{eq-optimalsamplesize-piany}
\end{eqnarray}
The  inequality follows since $\sqrt {N}(r\overline{X}_{1,rN}+(1-r) \overline{X}_{2,(1-r)N})\leq \Phi^{-1}(\alpha)$ is the most powerful level $\alpha$ test for the global null. The distribution of $\sqrt {N}(r\overline{X}_{1,rN}+(1-r) \overline{X}_{2,(1-r)N})$ is normal with mean  $\sqrt{N}\theta$ and variance one. Hence the RHS equals
$$ \mP_{(\theta, \theta)}\left(\sqrt{N}\overline{X}_{1,N}\leq \Phi^{-1}(\alpha)\right) = \mP_{(\theta, \theta)}\left(\sqrt{N} \overline{X}_{2,N}\leq \Phi^{-1}(\alpha)\right),  $$
which is the OMT procedure for $r$ equals 0 or 1, thus completing the proof. (The inequality becomes an equality for these two values only, since for $r\in (0,1)$ the marginally level $\alpha$ requirement leads to a strict inequality in \eqref{eq-optimalsamplesize-piany}.) 
\end{proof}

%  To get the journal style of heading for an appendix, mimic the following.

%\section{}
%\subsection{Title of appendix}

%Put your short appendix here.  Remember, longer appendices are
%possible when presented as Supplementary Web Material.  Please 
%review and follow the journal policy for this material, available
%under Instructions for Authors at \texttt{http://www.biometrics.tibs.org}.

\label{lastpage}

\end{document}